\documentclass[runningheads]{llncs}

\makeatletter
\usepackage[bookmarks,unicode,colorlinks=true]{hyperref}
   \def\@citecolor{blue}
   \def\@urlcolor{blue}
   \def\@linkcolor{blue}

\def\orcidID#1{\smash{\href{http://orcid.org/#1}{\protect\raisebox{-1.25pt}{\protect\includegraphics{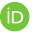}}}}}
\makeatother
\usepackage{wrapfig,lipsum,booktabs}

\usepackage{amssymb,amsmath,amsfonts}
\usepackage{temporal}
\usepackage{fontawesome} 
\usepackage{pifont}
\usepackage{textcomp}
\usepackage[mathscr]{eucal}
\usepackage{enumitem}
\usepackage{svg}

\usepackage{tikz}

\usetikzlibrary{automata,positioning,decorations.markings,arrows,intersections,shapes,
  shapes.symbols,calc,backgrounds,shadings}
\usetikzlibrary{overlay-beamer-styles}

\tikzset{
  >=stealth,
  box state/.style={draw,rectangle,rounded corners,fill=safecellcolor,minimum size=8mm},
  prob state/.style={draw,very thick,shape=circle,darkblue,minimum size=3mm,inner sep=0mm},
  node distance=2cm,on grid,auto, initial text=,
  every loop/.style={shorten >=0pt},
  accepting/.style={double distance=1.2pt, outer sep = 0.6pt+\pgflinewidth},
  accepting dot/.style={above=-2.5pt,circle,fill,darkgreen,inner sep=2pt,radius=1pt},
  loop above/.append style={every loop/.append style={out=120, in=60, looseness=6}},
  loop below/.append style={every loop/.append style={out=300, in=240, looseness=6}},
  loop left/.append style={every loop/.append style={out=210, in=150, looseness=6}},
  loop right/.append style={every loop/.append style={out=30, in=330, looseness=6}},
  accepting arc/.style={dashed},
  marked/.style={
    dashed,
    opacity=0.3
  },
  marked on/.style={alt=#1{marked}{}},
}

\definecolor{darkgreen}{rgb}{0,0.6,0}
\definecolor{lightblue}{rgb}{0.5,0.6,1.0}
\definecolor{lightgray}{rgb}{0.98,0.98,0.98}
\definecolor{mauve}{rgb}{0.58,0,0.82}
\definecolor{sienna}{rgb}{0.6,0.18,0.09}
\colorlet{darkblue}{blue!60!black}
\colorlet{darkred}{red!50!black}
\colorlet{safecellcolor}{yellow!5}
\colorlet{goodcellcolor}{green!10}
\colorlet{badcellcolor}{blue!10}

\DeclareMathOperator{\infi}{inf}

\newcommand{\tuple}[1]{( #1 )}

\newcommand{\bB}{\mathbb{B}}
\newcommand{\DIST}{{\mathcal D}}

\newcommand{\Aa}{\mathcal{A}}
\newcommand{\Mm}{\mathcal{M}}
\newcommand{\Gg}{\mathcal{G}}
\newcommand{\Ff}{\mathcal{F}}

\newcommand{\FRuns}{\mathit{FRuns}}
\newcommand{\Runs}{\mathit{Runs}}
\DeclareMathOperator{\last}{\mathit{last}}
\newcommand{\set}[1]{\left\{ #1 \right\}}

\newcommand{\seq}[1]{\langle #1 \rangle}
\DeclareMathOperator{\supp}{\mathit{supp}}
\newcommand{\eE}{\mathbb E}
\newcommand{\Real}{\mathbb R}

\DeclareMathOperator{\PSat}{\mathsf{PSyn}}
\DeclareMathOperator{\PSemSat}{\mathsf{PSem}}

\bibliographystyle{splncs04}

\usepackage{xcolor,colortbl}

\definecolor{Gray}{gray}{0.85}
\definecolor{LightCyan}{rgb}{0.88,1,1}

\setlength{\tabcolsep}{1pt}

\let\llncssubparagraph\subparagraph
\let\subparagraph\paragraph
\usepackage[compact]{titlesec}
\let\subparagraph\llncssubparagraph

\usepackage{mathtools}

\newcommand{\mMAX}{\mathrm{Max}}
\newcommand{\mMIN}{\mathrm{Min}}
\newcommand{\pto}{\xrightharpoondown{}}
\newcommand{\hSigma}{\overline{\Sigma}}
\newcommand{\hPi}{\overline{\Pi}}

\hypersetup{pdfauthor={Name}}
\pdfstringdefDisableCommands{
  \def\\{}
  
  \def\texttt#1{<#1>}
}

\begin{document}
\title{Alternating Good-for-MDP Automata
\thanks{
\includegraphics[height=8pt]{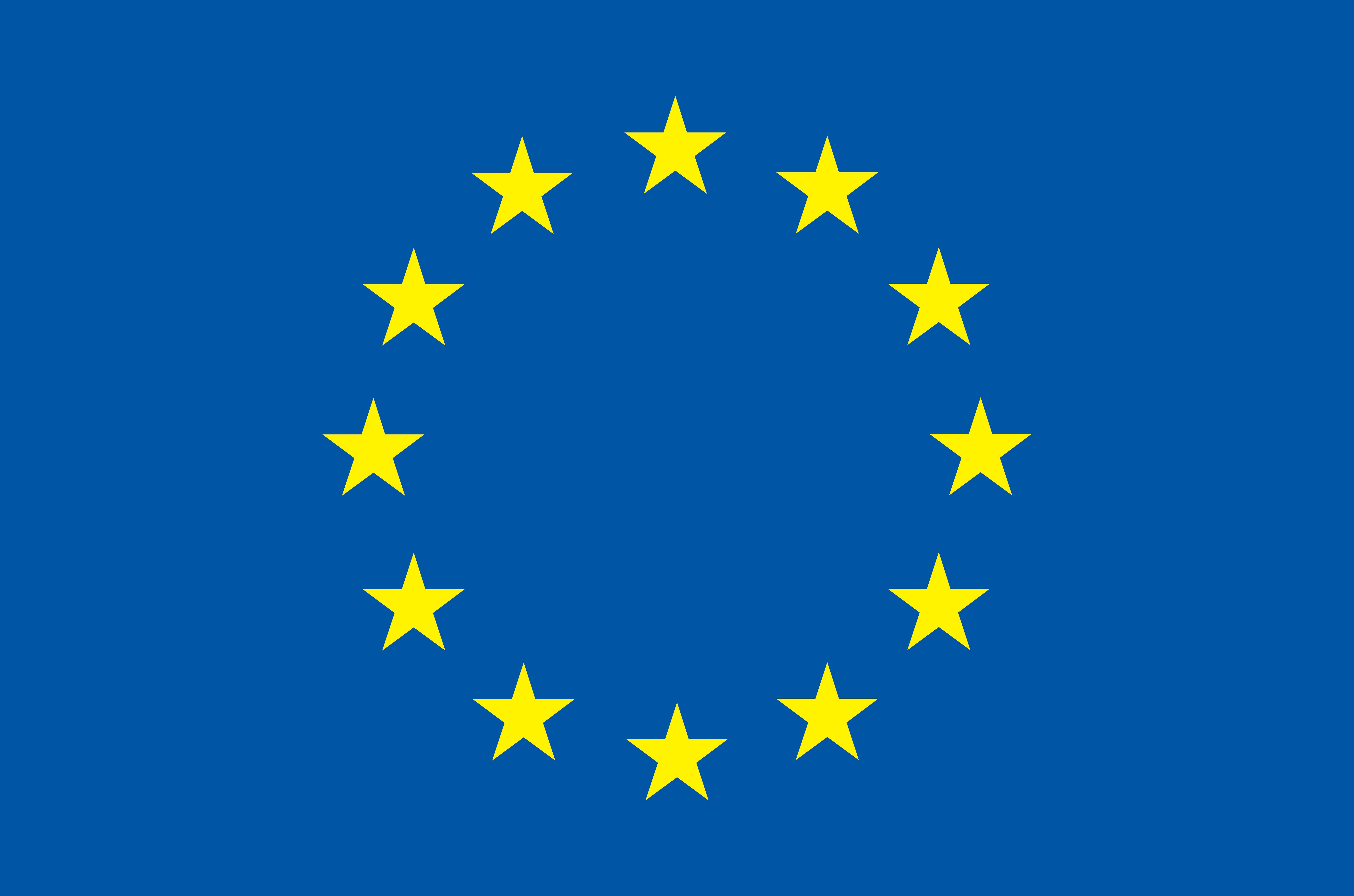} This project has received funding from the European Union’s Horizon 2020 research and innovation programme under grant agreements 864075 (CAESAR), and 956123 (FOCETA).
This work is supported in part by the National Science Foundation grant 2009022, by a CU Boulder Research and Innovation Office grant, and
by the EPSRC through grant EP/V026887/1.}
}
\author{Ernst Moritz Hahn\inst{1}\orcidID{0000-0002-9348-7684}
  \and Mateo Perez\inst{2}\orcidID{0000-0003-4220-3212}
  \and Sven Schewe\inst{3}\orcidID{0000-0002-9093-9518}
  \and Fabio Somenzi\inst{2}\orcidID{0000-0002-2085-2003}
  \and Ashutosh Trivedi\inst{2}\orcidID{0000-0001-9346-0126}
  \and Dominik Wojtczak\inst{3}\orcidID{0000-0001-5560-0546}}

\institute{
  University of Twente, The Netherlands
  \and
  University of Colorado Boulder, USA
  \and
  University of Liverpool, UK
}
\authorrunning{E. M. Hahn, M. Perez, S. Schewe, F. Somenzi, A. Trivedi, and D. Wojtczak}
\titlerunning{Alternating GFM Automata}

\maketitle              

\begin{abstract}
When omega-regular objectives were first proposed in model-free reinforcement learning (RL) for controlling MDPs, deterministic Rabin automata were used in an attempt to provide a direct translation from their transitions to scalar values.
While these translations failed, it has turned out that it is possible to repair them by using good-for-MDPs (GFM) B\"uchi automata instead.
These are nondeterministic B\"uchi automata with a restricted type of nondeterminism, albeit not as restricted as in good-for-games automata.
Indeed, deterministic Rabin automata have a pretty straightforward translation to such GFM automata, which is bi-linear in the number of states and pairs.
Interestingly, the same cannot be said for deterministic Streett automata:
a translation to nondeterministic Rabin or B\"uchi automata comes at an exponential cost, even without requiring the target automaton to be good-for-MDPs.
Do we have to pay more than that to obtain a good-for-MDP automaton?
The surprising answer is that we have to pay significantly less when we instead expand the good-for-MDP property to alternating automata:
like the nondeterministic GFM automata obtained from deterministic Rabin automata, the alternating good-for-MDP automata we produce from deterministic Streett automata are bi-linear in the the size of the deterministic automaton and its index, and can therefore be exponentially more succinct than minimal nondeterministic B\"uchi automata.

\end{abstract}

\section{Introduction}
\label{sec:intro}

Omega-automata \cite{Thomas90b,Perrin04} have found renewed interest---often as the result of translating a formula in LTL~\cite{Pnueli77}---as specifications of qualitative objectives in reinforcement learning (RL)~ \cite{Sutton18}. 
The acceptance condition of an $\omega$-automaton determines the reward whose cumulative return the learning agent strives to maximise.
The relation between the automaton and the reward signal should ensure that a strategy that maximises the expected return also maximises the probability to realise the objective.
This so-called \emph{faithfulness} requirement~\cite{Hahn20c} restricts the type of $\omega$-automaton that can be used to represent the objective, and this paper concerns how to find the right type of $\omega$-automata.

Deterministic automata with various types of acceptance conditions have been used in model checking and strategy synthesis \cite{Baier08}; notably, B\"uchi, parity, Rabin, and Streett.  While deterministic B\"uchi automata---including generalised deterministic B\"uchi automata---do not accept all $\omega$-regular languages, deterministic parity, Rabin, and Streett automata do; therefore, they are employed in the formulation of general solutions to synthesis problems.
In addition, maximising the chance of meeting a parity and Rabin winning conditions in a game can be obtained using positional strategies, while Streett winning conditions require finite additional memory.
This means that a positional strategy for a Markov decision process (MDP) or a Markov game endowed with a parity or Rabin objective can be turned into a strategy to control the environment, such that strategy only uses the state of the $\omega$-automaton for the objective as memory.
Strategy computation methods for both Rabin and Streett automata have been studied extensively \cite{Buhrke96,deAlfa98,Piterm06c}.
These methods, however, are not applicable in RL.
In order to apply RL to the computation of optimal strategies for $\omega$-regular objectives, we have to devise a scheme for doling out rewards that, for generality, depend on the given $\omega$-automaton, and perhaps on some hyperparameters of the learning algorithm, but not on the MDP---or the Markov game---for which a control strategy is sought.

Two features of RL algorithms significantly affect the choice of translation from acceptance condition to rewards:
1) they require positional optimal strategies after the translation to rewards, as they learn values of states and transitions, and 2) the same transitions will always be optimal once a property is translated into scalar rewards.
This appears to effectively exclude using Streett conditions, as optimal control requires memory for Streett objectives.
This is regrettable, as Streett objectives do occur in practice.
GR(1) \cite{BloemJPPS12} conditions, for example, translate smoothly into Streett objectives (in the pure original form, to one pair Streett objectives), such that a conjunction of GR(1) objectives will always have a natural representation as a deterministic Streett automaton.
Likewise, each strong fairness requirement produces a Streett pair.
Moreover, minimising the chance of satisfying a Rabin condition given as a deterministic Rabin automaton (DRA) is also equivalent to maximising the chance of satisfying the Streett condition given by its dual.

A natural way to move to simpler acceptance conditions requires some form of nondeterminism.
Full recourse to nondeterminism, however, is not compatible with the computation of optimal strategies for MDPs or Markov games.
If we want to move away from using deterministic automata to describe the objective, we therefore need to impose restrictions on the automaton's nondeterminism.
The precise nature of these restrictions depends on the type of environment interacting with the agent, whose control strategy we want to build.
If the environment is a Markov decision process \cite{Put94}, the automaton needs to be Good-for-MDPs (GFM) \cite{Hahn20,Vardi85,Courco95}, while, for Markov games, with two strategic players, the stronger requirements of Good-for-Games automata \cite{Henzin06} must be satisfied. 
GFM automata have the advantage that they can use simpler acceptance mechanisms.
In particular, the GFM automata developed so far are nondeterministic B\"uchi automata, and being able to use a simple acceptance mechanism like B\"uchi is quite beneficial for RL \cite{Hahn19,Hahn20}---though it is possible to use parity automata, using them comes at a cost \cite{Hahn20b}.

When starting with a deterministic Streett automaton (DSA), a translation to a nondeterministic B\"uchi automaton (NBA) \cite{SV89}, or even to a nondeterministic Rabin automaton \cite{SV89}, comes at the cost of an exponential blow-up, even without the restriction to GFM automata.
This raises the question of whether or not there is a different way to efficiently translate the DSA into a suitable automaton.
Our main result is that
\textbf{alternating GFM automata can be exponentially more succinct than general nondeterministic B\"uchi automata}.

Intuitively, this should not be possible:
the reason for the exponential blow-up from DSAs to NBAs (and even NRAs) is that one will either need some form of memory, such as a latest appearance record (LAR), or a nondeterministic guess as to which way each Streett pair is satisfied in.
Recall that a Streett pair consists of a green and a red set of states or transitions, and it is satisfied if no entry of the red set \emph{or} some entry of the green set occurs infinitely often; a nondeterministic automaton can guess, for each pair, to validate either of these conditions.
The intuitive effect of this blow-up would be that starting with DSAs is something that efficient RL approaches will struggle with.

The key message of this paper is that one can trade-off \emph{alternation} for memory in computing an optimal strategy by moving to \emph{alternating} GFM automata instead of traditional nondeterministic ones. 
Here is the interesting bit: while we do, unsurprisingly, need a latest appearance record (LAR \cite{Gurevi82,Dziemb97}) in the control strategy we develop, the blow-up due to the LAR is not necessary while learning the optimal strategy!

\section{Preliminaries}

A \emph{probability distribution} over a finite set $S$ is a function
$d \colon S {\to} [0, 1]$ such that $\sum_{s \in S} d(s) = 1$.  Let
$\DIST(S)$ denote the set of all discrete distributions over $S$.  We
say a distribution ${d \in \DIST(S)}$ is a \emph{point distribution}
if $d(s) {=} 1$ for some $s \in S$.  For $d \in \DIST(S)$ we write
$\supp(d)$ for $\set{s \in S \colon d(s) > 0}$.

\subsection{Stochastic Game Arenas and Markov Decision Processes}
\label{sec:MDPs}

A \emph{stochastic game arena} $\Gg$ is a tuple
$(S, s_0, A, T, S_\mMAX, S_\mMIN, AP, L)$, where
   $S$ is a finite set of states,
   $s_0 \in S$ is the initial state,
   $A$ is a finite set of {\it actions},
   $T : S \times A \pto \DIST(S)$ is the \emph{probabilistic
  transition (partial) function},
  $\{S_\mMAX, S_\mMIN\}$ is a
partition of the set of states $S$, 
$AP$ is the set of {\it 
  atomic propositions}, and 
  $L : S \to 2^{AP}$ is the {\it labeling
  function}.
For $s \in S$, $A(s)$ denotes the set of actions enabled in $s$.  
For states $s, s' \in S$ and $a \in A(s)$
we write $p(s' | s, a)$ for $T(s, a)(s')$.  

A {\it run} of $\Gg$ is an
$\omega$-word
$\seq{s_0, a_1, s_1, \ldots} \in S \times (A \times S)^\omega$
such
that $p(s_{i+1} | s_{i}, a_{i+1}) > 0$ for all $i \geq 0$.  A finite
run is a finite such sequence, that is, a word in
$ S \times (A \times S)^*$.  
For a {\it run} $r = \seq{s_0, a_1, s_1, \ldots}$ we define the corresponding
labeled run as $L(r) = \seq{L(s_0), L(s_1), \ldots} \in (2^{AP})^\omega$.
We write $\Runs^\Gg (\FRuns^\Gg)$ for the set of runs
(finite runs) of the SGA $\Gg$ and $\Runs{}^\Gg(s) (\FRuns{}^\Gg(s))$
for the set of runs (finite runs) of the SGA $\Gg$ starting from state
$s$.  We write $\last(r)$ for the last state of a finite run $r$.

A game on an SGA $\Gg$ is played between two players, Max and Min, by moving a token through the states of the arena. 
The game begins with a token in an {\em initial state} $s_0$; players $\mMAX$ and $\mMIN$ construct an infinite run by taking turns to choose enabled actions when the token is in a state controlled by them, and then moving the token to a successor state sampled from the selected distribution.
A strategy of player $\mMIN$ in $\Gg$ is a partial function $\pi \colon \FRuns \pto \DIST(A)$, defined for $r \in \FRuns$ if and only if $\last(r) \in S_\mMIN$, such that $\supp(\sigma(r)) \subseteq A(\last(r))$. A strategy $\sigma$ of player Max is defined analogously.
We drop the subscript $\Gg$ when the arena is clear from the context.
Let $\Sigma_\Gg$ and $\Pi_\Gg$ be the sets of all strategies of player Max and player Min, respectively.

A memory skeleton for $\Gg$ is a tuple $\mathbf{M} = (M, m_0, \alpha_u)$ where $M$ is a finite set of memory states, $m_0 \in M$ is the initial state, and $\alpha_u: M {\times} 2^{AP} \to M$ is the memory update function. 
The extended memory update $\hat{\alpha}_u \colon M {\times} (2^{AP})^* \to M$ 
can be defined in the usual manner. 
A finite memory strategy of player $\mMAX$ in $\Gg$ over a memory skeleton $\mathbf{M}$ is a Mealy machine 
$(\mathbf{M}, \alpha_x)$ where $\alpha_x: S_\mMAX {\times} M \to \DIST(A)$ is the {\it next action function} 
that suggests the next action based on the SGA and the memory state. 
The semantics of a finite memory strategy $(\mathbf{M}, \alpha_x)$ is given as a strategy 
$\sigma \in \Sigma_\Gg$ such that for every $r \in \FRuns$ with $\last(r) \in S_\mMAX$, we have that 
$\sigma(r) = \alpha_x(\last(r), \hat{\alpha}_u(m_0, L(r)))$.

A strategy $\sigma$ is {\it pure} if $\sigma(r)$ is a point distribution wherever it is defined; otherwise, $\sigma$ is \emph{mixed}.  We say that $\sigma$ is {\it stationary} if $\last(r) = \last(r')$ implies $\sigma(r) = \sigma(r')$ wherever $\sigma$ is defined.  
A strategy is \emph{positional} if it is both pure and stationary. 
We write $\hSigma_\Gg$ and $\hPi_\Gg$ for the sets of all \emph{positional} strategies of player Max and player Min, respectively.

Let $\Runs^\Gg_{\sigma,\pi}(s)$ denote the subset of runs $\Runs^\Gg(s)$
starting from state $s$ that are consistent with player Max and player Min following strategies
$\sigma$ and $\pi$, respectively.
The behavior of an SGA $\Gg$ under a strategy pair $(\sigma, \pi) \in \Sigma_\Gg
\times \Pi_\Gg$ is defined on the probability space $(\Runs^\Gg_{\sigma, \pi}(s), \Ff_{\Runs^\Gg_{\sigma, \pi}(s)}, \Pr^\Gg_{\sigma, \pi}(s))$ over
the set of infinite runs $\Runs^\Gg_{\sigma,\pi}(s)$, where $\Ff_{\Runs^\Gg_{\sigma, \pi}(s)}$ is the standard $\sigma$-algebra over them.  Given a
random variable $f \colon \Runs^\Gg \to \Real$ over the infinite runs
of $\Gg$, we denote by $\eE^\Gg_{\sigma, \pi}(s) \set{f}$ the expectation of $f$ over the runs in the probability space $(\Runs^\Gg_{\sigma, \pi}(s), \Ff_{\Runs^\Gg_{\sigma, \pi}(s)},
\Pr^\Gg_{\sigma, \pi}(s))$.

We say that a SGA is a Markov decision process if $A(s)$ is a singleton for every $s \in S_\mMIN$ and is a Markov chain if $A(s)$ is singleton for every $s \in S$.
To distinguish an MDP from an  SGA, we denote an MDP by $\Mm$ and write its signature $\Mm = (S, s_0, A, T, AP, L)$ by assigning the (choiceless) states of player Min to player Max.
The notions defined for SGAs naturally carry over to MDPs.

\subsection{Omega-Automata}
\label{sec:omega-automata}

An alphabet is a finite set of letters.
We write $\bB$ for the binary alphabet $\set{0, 1}$.
A finite word over an alphabet $\Sigma$ is a finite concatenation of symbols from $\Sigma$. 
Similarly, an $\omega$-word $w$ over $\Sigma$ is a function
$w : \omega \to \Sigma$ from the natural numbers to $\Sigma$.
We write $\Sigma^*$ and $\Sigma^\omega$ for the set of finite and $\omega$-strings over $\Sigma$. 

  An $\omega$-automaton $\mathcal{A} =
 \tuple{\Sigma,Q,q_0,\delta,\alpha}$ consists of a finite alphabet
  $\Sigma$, a finite set of states $Q$, an initial state $q_0 \in Q$,
  a transition function $\delta : Q \times \Sigma \to 2^Q$, and an
  acceptance condition $\alpha: Q^\omega \to \bB$.  A
  \emph{deterministic} automaton is such that $\delta(q,\sigma)$ is a
  singleton for every state $q$ and alphabet letter $\sigma$.  For
  deterministic automata, we write $\delta(q,\sigma) = q'$ instead of
  $\delta(q,\sigma) = \{q'\}$.
  
  A \emph{run} of an automaton
  $\mathcal{A} = \tuple{\Sigma,Q,q_0,\delta,\alpha}$ on word
  $w \in \Sigma^\omega$ is a function $\rho : \omega \to Q$ such that
  $\rho(0) = q_0$ and $\rho(i+1) \in \delta(\rho(i), w(i))$.  A run
  $\rho$ is \emph{accepting} if $\alpha(\rho) = 1$.  
  A word $w$ is
  accepted by $\mathcal{A}$ if there exists an accepting run of
  $\mathcal{A}$ on $w$.  The language of $\mathcal{A}$, written
  $\mathcal{L}(\mathcal{A})$, is the set of words accepted by $\mathcal{A}$.
  The set of states that appear infinitely often in $\rho$ is written
  $\infi(\rho)$.
  A deterministic automaton $\mathcal{D}$ has exactly
  one run for each word in $\Sigma^\omega$.
  We write
  $\infi^{\mathcal{D}}(w)$ for the set of states that appear
  infinitely often in the unique run of $\mathcal{D}$ on $w$;  
  when clear from the context, 
  we drop the superscript and simply write $\infi(w)$.

Several ways to give finite presentations of the acceptance
conditions are in use. The ones relevant to this paper are listed below.
\begin{itemize}
\item A \emph{B\"uchi} acceptance condition is specified by a set of
  states $F \subseteq Q$ such that 
  \[
  \alpha(\rho) = [\,\infi(\rho) \cap F \neq \emptyset\,].
  \]
\item A \emph{Rabin} acceptance condition of index $k$ is specified by
  $k$ pairs of sets of states, $\{\tuple{R_i,G_i}\}_{1 \leq i \leq k}$, and 
  intuitively a run should visit at least one set of Red (ruinous) states 
  finitely often and its corresponding Green (good) set of states infinitely often.
  Formally, 
  \[
  \alpha(\rho) = [\,\exists\, 1 \leq i \leq
  k \scope \infi(\rho) \cap R_i = \emptyset \text{ and }
  \infi(\rho) \cap G_i \neq \emptyset\,].
  \]
\item A \emph{Streett} acceptance condition of index $k$ is specified by
  $k$ pairs of sets of states, $\{\tuple{G_i,R_i}\}_{1 \leq i \leq k}$, and 
  intuitively a run should visit each Red set of states 
  finitely often or its corresponding Green set of states infinitely often.
  Formally, 
  \[
  \alpha(\rho) = [\,\forall\, 1 \leq i \leq
  k \scope \infi(\rho) \cap R_i = \emptyset \text{ or }
  \infi(\rho) \cap G_i \neq \emptyset\,].
  \]
\end{itemize}

We also allow for moving the acceptance condition from states to transitions.
For a B\"uchi acceptance condition, this means defining a set $F \subseteq Q\times \Sigma \times Q$ of final transitions, where the $i^{th}$ transition for a run $\rho$ of a word $w$ is $t(i) = \big(\rho(i),w(i),\rho(i+1)\big)$, $\inf_t(\rho,w)$ is the set of transitions that occur infinitely often, and a run $\rho$ is accepting for $w$ if $\inf_t(\rho,w)\cap F \neq \emptyset$.

\subsection{Semantic Satisfaction: Optimal Strategies Against $\omega$-Automata}
\label{ssec:semanticProb}

Given an MDP $\Mm = (S, s_0, A, T, AP, L)$ and an $\omega$-automaton $\Aa = \tuple{2^{AP},Q,q_0,\delta,\alpha}$, we are interested in strategies that maximise the probability that the labels of a run of $\Mm$ form an $\omega$-word in the language of $\Aa$. A strategy $\sigma \in \Sigma_\Mm$ and initial state $s \in S$ determine a sequence $X_i$ of random variables
denoting the $i^{th}$ state of the MDP, where $X_0 = s$.

We define the optimal satisfaction probability $\PSemSat^\Mm_{\Aa}(s)$ as 
\[
\PSemSat^\Mm_{\Aa}(s) 
= 
\sup_{\sigma \in \Sigma_{\Mm}} \Pr{}^\Mm_{\sigma}(s) \set{ \seq{L(X_0), L(X_1), \ldots } \in \mathcal{L}(\mathcal{A})} .
\]
We say that a strategy $\sigma \in \Sigma_\Mm$ is optimal for $\Aa$ if 
\[
\Pr{}^\Mm_{\sigma}(s) \set{ \seq{L(X_0), L(X_1), \ldots } \in \mathcal{L}(\mathcal{A})} = \PSemSat^\Mm_{\Aa}(s)
\]
for all $s \in S$.

\subsection{Good-for-MDPs Automata}
\label{ssec:product}
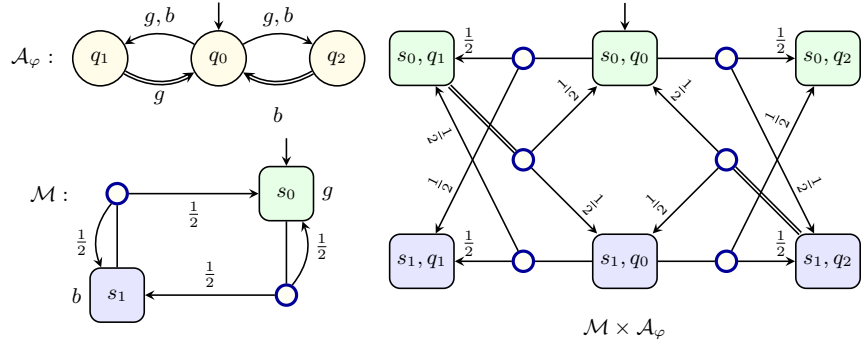
\begin{figure}[t]
  \centering
  \begin{tikzpicture}[scale=0.9,transform shape,
      every state/.append style={fill=yellow!10},semithick,
      el/.style = {inner sep=2pt, align=left, sloped}]
    \begin{scope}
      \node[state, initial above] (q0) {$q_0$};
      \node[state] (q1) [left=1.75cm of q0] {$q_1$};
      \node [left=1cm of q1] {$\Aa_\varphi:$};
      \node[state] (q2) [right=1.75cm of q0] {$q_2$};
      \path[->]
      (q0) edge [bend right] node[above] {$g, b$} (q1)
      (q1) edge [double,bend right] node[label={below:$g$}] {} (q0)
      (q0) edge [bend left]  node[above] {$g,b$} (q2)
      (q2) edge [double,bend left]  node[label={below:$b$}] {} (q0);
    \end{scope}
    \begin{scope}[yshift=-2cm,xshift=1cm]
      \node[box state,initial above, fill=goodcellcolor,label=right:$g$] (S0) {$s_0$};
      \node[prob state,fill=safecellcolor] (Prob) [below=1.5cm of S0] {};
      \node[box state,fill=badcellcolor,label=left:$b$] (S1) [left=2.5cm of Prob] {$s_1$};
      \node[prob state,fill=safecellcolor] (Prob2) [above=1.5cm of S1] {};
      \node [left=1cm of Prob2] {$\Mm:$};
      
      \path[-]
      (S0) edge[swap] node[label={left:$ $}] {} (Prob);
      \path[->]
      (Prob) edge[swap] node {$\frac{1}{2}$} (S1)
      edge[bend right] node[label={right:$\frac{1}{2}$}]{} (S0);
      \path[-]
      (S1) edge[swap] node[label={left:$ $}] {} (Prob2);
      \path[->]
      (Prob2) edge[swap] node {$\frac{1}{2}$} (S0)
      edge[bend right] node[label={left:$\frac{1}{2}$}]{} (S1);
    \end{scope}
    \begin{scope}[xshift=6cm]
      \node[box state,initial above, fill=goodcellcolor,label=right:$ $] (S0){$s_0, q_0$};
          \node[box state,fill=badcellcolor] (S1)[below=3cm of S0] {$s_1, q_0$};
          \node[box state,fill=goodcellcolor] (S2)[left=3cm of S0] {$s_0, q_1$};
          \node[box state,fill=badcellcolor] (S3)[below=3cm of S2] {$s_1, q_1$};
          \node[box state,fill=goodcellcolor] (S4)[right=3cm of S0] {$s_0, q_2$};
          \node[box state,fill=badcellcolor] (S5)[below=3cm of S4] {$s_1,
            q_2$};
          \node [below=1cm of S1] {$\Mm\times\Aa_\varphi$};

          \node[prob state,fill=safecellcolor] (Prob1) [left=1.5cm of S0] {};
          \node[prob state,fill=safecellcolor] (Prob2) [left=1.5cm of S1] {};
          \node[prob state,fill=safecellcolor] (Prob3) [right=1.5cm of S0] {};
          \node[prob state,fill=safecellcolor] (Prob4) [right=1.5cm of S1] {};
          \node[prob state,fill=safecellcolor] (Prob5) [below=1.5cm of Prob1] {};
          \node[prob state,fill=safecellcolor] (Prob6) [below=1.5cm of Prob3] {};
          \path[-]
          (S0) edge[swap] node[label={left:$ $}] {} (Prob1)
          (S0) edge[swap] node[label={left:$ $}] {} (Prob3)
          (S1) edge[swap] node[label={left:$ $}] {} (Prob2)
          (S1) edge[swap] node[label={left:$ $}] {} (Prob4)
          (S2) edge[double,swap] node[label={left:$ $}] {} (Prob5)
          (S5) edge[double,swap] node[label={left:$ $}] {} (Prob6);

          \path[->] (Prob1) edge[swap] node[el,above,pos=0.75] {$\frac{1}{2}$} (S2)
          edge[swap] node[el,above,pos=0.75] {$\frac{1}{2}$} (S3);
          \path[->] (Prob2) edge[swap] node[el,below,pos=0.75]
               {$\frac{1}{2}$}  (S2)  edge[swap] node[el,above,pos=0.75] {$\frac{1}{2}$}(S3);
          \path[->] (Prob3) edge[swap] node[el,above,pos=0.75] {$\frac{1}{2}$}  (S4) edge[swap] node[el,above,pos=0.75] {$\frac{1}{2}$} (S5);
          \path[->] (Prob4) edge[swap]  node[el,above,pos=0.75] {$\frac{1}{2}$}  (S4)
          edge[swap]  node[el,above,pos=0.75] {$\frac{1}{2}$} (S5);
          \path[->] (Prob5) edge[swap]  node[el,above,pos=0.75] {$\frac{1}{2}$}
          (S0) edge[swap]  node[el,above,pos=0.75] {$\frac{1}{2}$} (S1);
          \path[->] (Prob6) edge[swap]  node[el,above,pos=0.75] {$\frac{1}{2}$}  (S0) edge[swap] node[el,above,pos=0.75] {$\frac{1}{2}$} (S1);
    \end{scope}
  \end{tikzpicture}
  \caption{\small {\bf Syntactic and Semantic Probabilities Differ. }
    A nondeterministic B\"uchi
    automaton $\Aa_\varphi$ (top left) that recognises the language of infinitely many 
    $g$'s or infinitely many $b$'s (which accepts all $\omega$-words) and a Markov
    decision process $\Mm$, whose set of actions is a singleton (a Markov chain).
    Note that double edges mark accepting transitions.
    Notice that the MDP $\Mm$ satisfies the property with probability $1$. Their
    product is shown on the right side, where there is no accepting end-component. Hence, the
    probability of reaching the accepting end-component (under any strategy) is $0$.} 
  \label{fig:NBW}
\end{figure}

Given an MDP $\Mm = (S, s_0, A, T, AP, L)$ and automaton $\Aa =  \tuple{2^{AP},Q,q_0,\delta,\alpha}$, the \emph{probabilistic model checking} problem is to find the optimal value $\PSemSat^\Mm_{\Aa}(s)$ and an optimal strategy in $\Sigma_\Mm$.
An intuitive way to compute $\PSemSat^\Mm_{\Aa}(s)$ is to build the synchronous product $\Mm{\times}\Aa$ of $\Mm$ and $\Aa$ and compute the optimal probability $\PSat^\Mm_{\Aa}(s)$ and a strategy that maximizes the probability of satisfying the acceptance condition. 
If these values $\PSemSat^\Mm_{\Aa}(s)$ and $\PSat^\Mm_{\Aa}(s)$ coincide for all possible $\Mm$, then the automaton is said to be \emph{good-for-MDPs} \cite{Hahn20}.

The synchronous product is an MDP $\Mm \times \Aa = \tuple{S\times Q, (s_0, q_0), A\times Q, T^\times, AP, L}$, where
$$T^\times((s,q),(a,q'))(s',q'') = \begin{cases}
    T(s,a)(s') & \text{if } q' \in \delta(q, L(s)) \text{ and } q'' = q' \\
    0 & \text{if } q' \in \delta(q, L(s)) \text{ and } q'' \neq q' \\
    \text{undefined} & \text{otherwise .}
\end{cases}$$
A strategy $\sigma \in \Sigma_{\Mm\times \Aa}$ and initial state $s \in S$ determine a sequence $(X_i, Q_i)$ of random variables
denoting the $i^{th}$ state of the product MDP, where $X_0 = s$ and $Q_0 = q_0$.
The syntactic probability is defined to be
\[
    \PSat^\Mm_{\Aa}(s) = \sup_{\sigma \in \Sigma_{\Mm\times \Aa}} \eE^{\Mm\times \Aa}_{\sigma}(s) \set{ \alpha( \seq{Q_0, Q_1, \ldots })}.
\]
An automaton is good-for-MDPs if $\PSemSat^\Mm_{\Aa}(s) = \PSat^\Mm_{\Aa}(s)$ for all MDPs $\Mm$ and states $s \in S$.
Figure~\ref{fig:NBW} shows an example of an $\omega$-automaton with B\"uchi acceptance condition that is not GFM, while Figure~\ref{fig:DBW} shows an automaton that is a GFM (since every deterministic $\omega$-automaton is GFM.).

\begin{figure}[t]
   \centering
  \begin{tikzpicture}[scale=0.9,transform shape,
    every state/.append style={fill=yellow!10},semithick]
    \begin{scope}
      \node[state, initial above] (q0) {$q_0$};
      \node [below=1cm of q0] {$\Aa_\varphi$};
      \path[->]
      (q0) edge [double,loop right] node[label={right:$b$}] {} () 
      (q0) edge [double,loop left] node[label={left:$g$}] {} ();
    \end{scope}
        \begin{scope}[xshift=5cm,yshift=1cm]
      \node[box state,initial above, fill=goodcellcolor,label=right:$g$] (S0) {$s_0$};
      \node[prob state,fill=safecellcolor] (Prob)
           [below=1.5cm of S0] {};
           \node[box state,fill=badcellcolor,label=left:$b$] (S1)
                [left=2.5cm of Prob] {$s_1$};
      \node[prob state,fill=safecellcolor] (Prob2)
           [above=1.5cm of S1] {};
           \node [below=0.5cm of Prob] {$\Mm$};
           
      \path[-]
      (S0) edge[swap] node[label={left:$ $}] {} (Prob);
      \path[->]
      (Prob) edge[swap] node {$\frac{1}{2}$} (S1)
             edge[bend right] node[label={right:$\frac{1}{2}$}]{} (S0);
      \path[-]
      (S1) edge[swap] node[label={left:$ $}] {} (Prob2);
      \path[->]
      (Prob2) edge[swap] node {$\frac{1}{2}$} (S0)
             edge[bend right] node[label={left:$\frac{1}{2}$}]{} (S1);
    \end{scope}
        \begin{scope}[xshift=10cm,yshift=1cm]
      \node[box state,initial above, fill=goodcellcolor,label=right:$g$] (S0)
           {$s_0, q_0$};
      \node[prob state,fill=safecellcolor] (Prob)
           [below=1.5cm of S0] {};
           \node[box state,fill=badcellcolor,label=left:$b$] (S1)
                [left=2.5cm of Prob] {$s_1, q_0$};
      \node[prob state,fill=safecellcolor] (Prob2)
           [above=1.5cm of S1] {};
           \node [below=0.5cm of Prob] {$\Mm \times \Aa_\varphi$};
           
      \path[-]
      (S0) edge[double,swap] node[label={left:$ $}] {} (Prob);
      \path[->]
      (Prob) edge[swap] node {$\frac{1}{2}$} (S1)
             edge[bend right] node[label={right:$\frac{1}{2}$}]{} (S0);
      \path[-]
      (S1) edge[double,swap] node[label={left:$ $}] {} (Prob2);
      \path[->]
      (Prob2) edge[swap] node {$\frac{1}{2}$} (S0)
             edge[bend right] node[label={left:$\frac{1}{2}$}]{} (S1);
    \end{scope}
  \end{tikzpicture}
  \caption{\small {\bf Syntactic and Semantic Probabilities Agree. }
    A deterministic B\"uchi automaton $\Aa_\varphi$
    (left) that recognises the language of infinitely many $g$'s or infinitely many
    $b$'s (accepts all $\omega$-words) and an MDP $\Mm$ (center), whose set of actions is singleton (a Markov chain). 
    Again double edges mark accepting transitions.
    Notice that the MDP $\Mm$ satisfies the property with probability $1$. Their
    product is shown on the right side where the whole MDP is one accepting
    end-component. Hence, the probability of reaching the end-component (under any strategy) is $1$.} 
  \label{fig:DBW}
\end{figure}
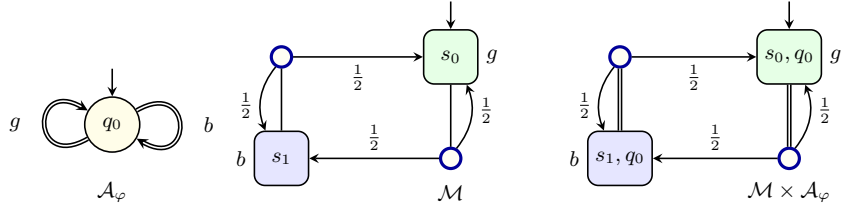

\subsection{GFM B\"uchi Automata and Reinforcement Learning}
\label{sec:rl}
The limit reachability technique~\cite{Hahn19} reduces the model checking problem for given MDP
and GFM B\"uchi automaton to a reachability problem by slightly changing the structure of the
product: One adds a target state $t$ that can be reached with a given
probability $1 - \zeta$ whenever visiting an accepting transition of
the original product MDP.
This reduction avoids the identification of accepting end-components and
thus allows a natural integration to a wide range of model-free RL
approaches.  Thus, while the proofs do lean on standard model checking
properties that are based on identifying winning end-components, they
serve as a justification not to consider them when running the
learning algorithm.

  For any $\zeta \in (0,1)$, the \emph{augmented MDP} $\Mm^\zeta$ is an MDP 
  obtained from $\Mm\!\times\!\Aa$ by adding a sink state $t$ with a self-loop
  to the set of states of $\Mm\!\times\!\Aa$, and by making $t$ a destination
  of each accepting transition
  $\tau$ of $\Mm\!\times\!\Aa$ with probability
  $1-\zeta$.  
  The original probabilities of all other destinations 
  of an accepting
  transition $\tau$ are multiplied by $\zeta$.
An example of an augmented MDP is shown in Figure~\ref{fig:add-sink}.
With a slight abuse of notation, 
if $\sigma$ is a strategy on the augmented MDP
$\Mm^\zeta$, we denote by $\sigma$ also the strategy on $\Mm \!\times\!\Aa$
obtained by removing $t$ from the domain of $\sigma$.
The following result shows the correctness of the construction.
\begin{theorem}[Limit Reachability Theorem~\cite{Hahn19}]
  \label{thm:limit-reach}
  There exists a threshold $\zeta' \in (0,1)$ such that, for all
  $\zeta > \zeta'$ and every state $s$, any strategy $\sigma$ that
  maximizes the probability of reaching the sink in $\Mm^\zeta$ is (1) an
  optimal strategy in $\Mm\times\Aa$ from $s$ and (2) induces an optimal
  strategy for the original MDP $\Mm$ from $s$ with the objective to produce a run in the language of  $\mathcal A$.
  Moreover, $\Mm$ produces such a run almost surely if, and only if, the sink is
  almost surely reachable in $\Mm^\zeta$ for all $0 < \zeta < 1$.
\end{theorem}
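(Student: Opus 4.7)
The plan is to relate sink-reachability in $\Mm^\zeta$ to Büchi acceptance in $\Mm\!\times\!\Aa$ via a quantitative identity, let $\zeta\uparrow 1$ to recover Büchi optimality of the reachability-maximising strategies, and then invoke the good-for-MDP property of $\Aa$ to descend from $\Mm\!\times\!\Aa$ back to $\Mm$.

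\emph{Key identity.} For any strategy $\sigma$ on $\Mm^\zeta$ and starting state $s$, let $N$ denote the random variable counting the number of accepting transitions taken along the induced run of $\Mm\!\times\!\Aa$, with the convention $N=\infty$ on Büchi-accepting runs. Since every visit to an accepting transition triggers an independent jump into $t$ with probability $1-\zeta$, the probability of never reaching $t$ equals $\eE_\sigma[\zeta^N]$ (with $\zeta^\infty=0$). Thus, writing $f_\zeta(\sigma,s) := \Pr^{\Mm^\zeta}_\sigma(s)\set{\text{reach } t}$,
\[
    f_\zeta(\sigma,s) \;=\; 1-\eE_\sigma[\zeta^N] \;\geq\; \Pr{}_\sigma[N=\infty].
\]

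\emph{Convergence for positional strategies.} Both the reachability objective on $\Mm^\zeta$ and the Büchi objective on $\Mm\!\times\!\Aa$ admit pure positional optima on a finite MDP. Under a positional $\sigma$ the product is a Markov chain whose runs almost surely settle in some BSCC; on an accepting BSCC $N=\infty$ almost surely, and on a non-accepting BSCC $N$ is almost surely finite. By dominated convergence $\eE_\sigma[\zeta^N]\to\Pr_\sigma[N<\infty]$ as $\zeta\uparrow 1$, so $f_\zeta(\sigma,s)\to\Pr_\sigma[N=\infty]$. Because there are only finitely many positional strategies and finitely many states, this convergence is uniform, yielding $V_\zeta(s):=\sup_\sigma f_\zeta(\sigma,s)\to \PSat^\Mm_\Aa(s)$ uniformly in $s$.

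\emph{Choice of threshold and GFM step.} Let $\delta>0$ be the smallest strictly positive gap between Büchi values of positional strategies across all starting states, and pick $\zeta'\in(0,1)$ so that the uniform error above is below $\delta/3$ for every $\zeta>\zeta'$. Then a positional strategy is reachability-optimal in $\Mm^\zeta$ if and only if it is Büchi-optimal in $\Mm\!\times\!\Aa$; more strongly, the Bellman-optimal action sets at every state coincide for the two objectives. Any optimal reachability strategy, positional or not, must play Bellman-optimal actions at each reachable state, and therefore maximises the Büchi value in $\Mm\!\times\!\Aa$, proving~(1). Since $\Aa$ is good-for-MDPs, $\PSat^\Mm_\Aa(s)=\PSemSat^\Mm_\Aa(s)$, so reading $\sigma$ as a finite-memory strategy on $\Mm$ (with the $\Aa$-state as memory) attains the semantic optimum, proving~(2).

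The main obstacle I anticipate is the last leg of the Bellman argument: non-positional reachability-optimal strategies could in principle accumulate many transient visits to accepting transitions on non-accepting trajectories, inflating $f_\zeta$ beyond the Büchi value. The fix is that for $\zeta$ sufficiently close to $1$ the reachability value function on $\Mm^\zeta$ and the Büchi value function on $\Mm\!\times\!\Aa$ agree state-wise, so the two action-optimality orderings coincide and optimality for one forces optimality for the other on the support of any strategy. The \emph{moreover} clause is comparatively easy: if the language objective holds almost surely under some strategy, then $N=\infty$ almost surely, so $t$ is reached almost surely for every $\zeta\in(0,1)$; conversely, if $t$ is reached almost surely, then $N=\infty$ almost surely (as $\zeta^n>0$ for finite $n$), hence the Büchi condition is almost surely satisfied.
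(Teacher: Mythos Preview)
The paper does not prove this theorem at all: it is quoted as a preliminary result from~\cite{Hahn19} and used as a black box, so there is no ``paper's own proof'' to compare against. Your sketch is therefore an independent reconstruction, and its core ingredients---the identity $f_\zeta(\sigma,s)=1-\eE_\sigma[\zeta^N]$, the BSCC dichotomy for positional strategies, the finitely-many-positional-strategies gap argument, and the GFM step---are correct and are essentially how the result is established in the cited work. The ``moreover'' clause is handled correctly as well.

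There is, however, a genuine gap in your treatment of non-positional optimal strategies. Your proposed fix, that ``for $\zeta$ sufficiently close to $1$ the reachability value function on $\Mm^\zeta$ and the B\"uchi value function on $\Mm\times\Aa$ agree state-wise,'' is false: even for the B\"uchi-optimal positional strategy $\tau^*$, the reachability value $1-\eE_{\tau^*}[\zeta^N]$ is strictly larger than $\Pr_{\tau^*}[N=\infty]$ whenever the transient part of the chain visits accepting transitions with positive probability before reaching a non-accepting BSCC. So you cannot conclude that the Bellman-optimal action sets literally coincide. What does work is an end-component argument: any strategy on a finite MDP almost surely settles into an end-component, and a reachability-optimal strategy (for $\zeta>\zeta'$) must reach accepting end-components with exactly the optimal B\"uchi probability, because reaching a non-accepting end-component yields a bounded expected number of further accepting visits and hence a reachability contribution that vanishes as $\zeta\uparrow 1$, which would contradict optimality against the positional B\"uchi-optimal benchmark. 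Recasting your last step in these terms closes the gap.
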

\noindent
Theorem~\ref{thm:limit-reach} leads to a very simple model-free RL algorithm for GFM B\"uchi automata.  
The augmented product is not built by the RL algorithm, which does not
know the transition structure 
of the environment MDP.
Instead, the observations are used to drive the objective automaton.  When the
automaton reports an accepting transition, the interpreter tosses a  biased coin
to give the learner a reward with probability $1-\zeta$. 
The interpreter also extracts the set of actions for the learner to choose from.
If the automaton is not deterministic and it has not taken the one
nondeterministic transition it needs to take yet, the set of actions the
interpreter provides to the learner includes the choice of special ``jump''
actions that instruct the automaton to move to a chosen accepting component.
When the automaton reports an accepting transition, the interpreter
gives the learner a positive reward with probability $1-\zeta$.  When
the learner actually receives a reward, the training episode
terminates.  Any RL algorithm that maximizes this probabilistic reward
is guaranteed to converge to a policy that maximizes the probability
of satisfaction of the objective.

\begin{figure}[t] 
  \centering
  \begin{tikzpicture}[scale=0.85,transform shape]
    \colorlet{darkgreen}{green!40!black}
    \begin{scope}
      \node[box state,fill=safecellcolor] (S0) {$0$};
      \node[prob state,fill=safecellcolor] (Prob)
      [below=1.5cm of S0] {};
      \node[box state,fill=safecellcolor] (S1)
      [left=2.5cm of Prob] {$1$};
      \node[box state,fill=safecellcolor] (S2)
      [right=2.5cm of Prob] {$2$};
      \path[->]
      (S2) edge[bend right=30, swap] node {$\tau_2$} (S0);
      \path[-]
      (S0) edge[double,swap] node[label={left:$\tau_0$}] {} (Prob);
      \path[->]
      (Prob) edge[swap] node {$p$} (S1)
             edge node {$1-p$} (S2)
      (S1) edge[double,bend left=30] node[label={$\tau_1$}] {} (S0);
    \end{scope}
    \node[single arrow, fill=black!55] at (3.75cm,-1.5cm) {\phantom{m}};
    \begin{scope}[xshift=8.5cm]
      \node[box state,fill=safecellcolor] (S0) {$0$};
      \node[prob state,fill=safecellcolor] (Prob)
      [below=1.5cm of S0] {};
      \node[box state,fill=safecellcolor] (S1)
      [left=2.5cm of Prob] {$1$};
      \node[box state,fill=safecellcolor] (S2)
      [right=2.5cm of Prob] {$2$};
      \path[->]
      (S2) edge[bend right=30, swap] node {$\tau_2$} (S0);
      \node[prob state,fill=safecellcolor] (Tau1)
      [left=1.2cm of S1] {};
      \node[box state, darkgreen, thick, fill=goodcellcolor] (Sink)
      [below=1.5cm of Prob] {$t$};
      \path[-]
      (S0) edge[swap] node {$\tau_0$} (Prob);
      \path[->]
      (Prob) edge[swap] node {$p\zeta$} (S1)
             edge node {$(1-p)\zeta$} (S2)
             edge[darkgreen] node[black] {$1-\zeta$} (Sink)
      (Tau1) edge[out=90,in=180] node {$\zeta$} (S0)
             edge[out=270,in=180,swap,darkgreen] node[black] {$1-\zeta$} (Sink);
     \path[-] (S1) edge[swap] node {$\tau_1$} (Tau1);
    \end{scope}
  \end{tikzpicture}
 \caption{\small Adding transitions to the target in the augmented product MDP.}
 \label{fig:add-sink}
\end{figure}
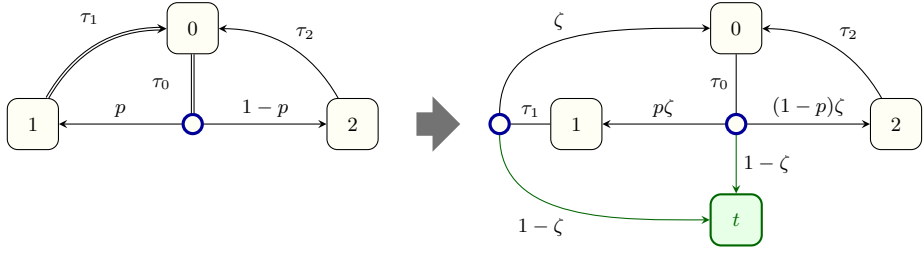

\section{Alternating GFM Automata}
\label{sec:outline}

Before giving a translation from Deterministic Streett automaton (DSA) to a good-for-MDP (GFM) alternating B\"uchi automaton (ABA), let us see how simple the translation is for its dual, a Deterministic Rabin Automaton (DRA).
When we start with a DRA $\mathcal R$, the translation to a GFM Nondeterministic B\"uchi Automaton (NBA) \cite{Hahn20}
is straightforward as shown next.
\begin{definition}[DRA to GFM NBA]
  For a given deterministic Rabin automaton $\mathcal{R} = \tuple{\Sigma,Q,q_0,\delta,\{\langle R_i,G_i\rangle \mid i\in \{1,\ldots,k\}}$, we construct a nondeterministic GFM automaton
$\mathcal{B} = \tuple{\Sigma,Q \times\{0,\ldots,k\}\cup \{ \bot\},(q_0,0),\delta',F}$ where:

\[
\delta'(\hat{q}, a) = \begin{cases}
\{\delta(q,a)\} \times \{0,\ldots,k\} & \text{ if $\hat{q} = (q, 0)$} \\
\{(\delta(q,a),i)\} & \text{ if $\hat{q} = (q, i)$, $i \not = 0$, and $q \notin R_i$} \\
\bot & \text{otherwise.}
\end{cases}
\]
and $F= \{(q,i) \mid i \in \{1,\ldots,k\} \mbox{ and } q \in G_i\}$.
\end{definition}

The resulting NBA makes only a single guess:
it guesses when an accepting end-component is reached in the product MDP $\mathcal M \times \mathcal R$ (noting that the $0$-copy is congruent to the original automaton), and then moves to a copy $i$, whose pair makes this end-component accepting.
It is easy to see that this automaton is language equivalent to $\mathcal R$ and
good-for-MDPs (e.g., it satisfies the simulation condition from \cite{Hahn20}).
For $k$ pairs, this creates only $k{+}1$ copies, and thus a small overhead; and it allows one to then use standard reward translation techniques for B\"uchi acceptance conditions \cite{Hahn19} in RL.

The question of how to maximise the probability to satisfy a Streett condition (or, likewise, how to minimise the probability to satisfy a Rabin condition) is more challenging.
Broadly speaking, the translation of a Rabin acceptance condition is simplified by the fact that the nondeterministic choices of an NBA can easily handle the resolution of the disjunction of the acceptance condition on pairs, and resolving nondeterminism is something that always needs to be done when analysing an MDP.
However, it is harder to accommodate for a conjunction of the acceptance condition on pairs, as in a Streett acceptance condition.
As a consequence, the translation of a deterministic Streett automaton to a nondeterministic Rabin automaton (without a restriction to GFM) leads to a blow-up that results in $2^{\theta(n)}$ states \cite{SV89}, while a translation to an NBA requires $n2^{\theta(k)}$ states \cite{SV89}, even without the restriction to GFM.

\emph{Surprisingly, there is a way to exploit \emph{alternating} good-for-MDP automata with a small blow-up of $k{+}2$ for $k$ pairs.}

As B\"uchi games can be handled with similar techniques as for B\"uchi MDPs (Section~\ref{sec:rl}) in model-free reinforcement learning (cf.~\cite{Hahn20b}), the alternation itself does not create problems during learning; still, it is quite surprising that such a method works. 
This is partly because of the exponential memory requirement for Streett conditions, and partly because the acceptance player for the MDP would not have access to decisions the rejection player has made in the resulting game.
However, while the automaton is small, the memory we infer from the winning strategy of this small automaton can be exponentially larger.

An optimal strategy in the resulting game does not in itself constitute a strategy for controlling the MDP for a given DSA.
This is because different strategic choices of the antagonistic rejection player will lead to different positions in the game, and there is no guarantee that a consistent positional strategy for all of these positions exists.
Moreover, the strategic choices of an antagonistic rejection player have no direct relation to the observable history.
We show, however, that the history can be used to identify a state in the game, whose decisions the acceptance player should follow.

The need for memory is, therefore, not gone.
Instead, the control strategy we construct in the correctness proof for the resulting alternating GFM B\"uchi automaton is only one part of the control strategy used for the MDP.
The other is a latest appearance record (LAR), which is kept in addition to the constructed game.
The LAR will determine the state, from a family of equivalent states, whose strategy will be followed.

\subsection{Alternating GFM Automata}
\label{sec:agfm}
There is a number of mildly different definitions of alternating automata, and we can use the simplest one, where the states are partitioned into nondeterministic and universal states.

\begin{definition}
  An alternating $\omega$-automaton $\mathcal{A} =
  \tuple{\Sigma,Q_n,Q_u,q_0,\delta,\alpha}$, with $Q=Q_n\cup Q_u$, is an automaton such that $\tuple{\Sigma,Q,q_0,\delta,\alpha}$ is a nondeterministic automaton, and $Q_n$ and $Q_u$ are disjoint sets of nondeterministic and universal states, respectively.
\end{definition}

  A \emph{run tree} of an alternating automaton
  $\mathcal{A} = \tuple{\Sigma,Q_n,Q_u,q_0,\delta,\alpha}$ on word
  $w \in \Sigma^\omega$ is a family of functions $\{\rho_j : \omega \to Q \mid j \in J\}$ for some non-empty index set $J$ such that 
  \begin{itemize}
      \item  $\rho_j$ is a run for all $j \in J$, and
      \item if $\rho_j$ has a universal state $q'$ at a position $i \in \omega$ ($\rho_j(i) = q' {\in} Q_u$), then, for all $q \in \delta\big(q',w(i)\big)$, there is a $j_q \in J$ such that $\rho_{j_q}(k)=\rho_{j}(k)$ for all $k \leq i$, and $\rho_{j_q}(i+1)=q$.
  \end{itemize}
    A run tree is accepting if all of the runs of $\{\rho_j : \omega \to Q \mid j \in J\}$ are accepting.

A minimal such family of runs can be viewed as a tree, where nondeterministc states have one successor, while universal states have many, namely all those defined by the local successor function.
Alternatively, a family of runs can be viewed as a game, where an angelic acceptance player chooses the successor for a nondeterministic state, while an antagonistic rejection player selects the successor for a universal state.
This way, they successively construct a run, and acceptance is decided by whether or not this run accepts.

We extend the product construction from Section \ref{ssec:product} to produce a B\"uchi game from the product $\Mm {\times} \Aa$ of an MDP $\Mm$ with an alternating B\"uchi automaton $\Aa$, where the decisions of the rejection player are simply the decision to resolve the nondeterminism from the universal states, while resolving the nondeterminism from the MDP \emph{and} resolving the nondeterminism from the nondeterministic automata states are left to the acceptance player.
Both players have positional optimal strategies (where, for the rejection player, positionality in\-cludes the state and the choice made by the acceptance player) in this game~\cite{mciver2002games}.

We refer to the probability, with which the acceptance player can win this product game from a product state $(s,q)$ as 
\begin{align*}
  \PSat^\Mm_{\Aa}(s,q) &= \sup_{\sigma \in \Sigma_{\Mm \times \Aa}} \inf_{\pi \in \Pi_{\Mm \times \Aa}} \eE^{\Mm\times\Aa}_{\sigma,\pi}(s,q) \set{ \alpha(\seq{X_0, X_1, \ldots}) } \; ,
\end{align*}
where $\alpha$ is the B\"uchi condition and $X_i$ is the random variable corresponding to the state of the automaton at the $i$-th step.

\begin{definition}[Alternating GFM Automata]
  \label{def:gfm}
An alternating automaton $\Aa$ is \emph{good for MDPs} if, for all
  MDPs $\Mm$, $\PSat^\Mm_{\Aa}(s_0,q_0) = \PSemSat^\Mm_{\mathcal A}(s_0)$ holds,
  where $s_0$ is the initial state of $\Mm$.
\end{definition}

\subsection{Construction of the Alternating B\"uchi Automaton}
\label{ssec:construction}
The motivation for the translation of a deterministic Streett automaton to a GFM automaton is similar to that for Rabin:
when having nondeterministic power, we can use it to guess when we have reached an accepting end-component that we plan to cover completely (i.e., we will almost surely visit every state and every transition in the end-component infinitely often).

While covering an accepting end-component may require memory (or randomisation), its properties with respect to the Streett condition are straightforward:
for every Streett pair $\langle G,R \rangle$, if the end-component contains a red state $q \in R$ then it must also contain a green state $q' \in G$ from the same pair, which should (almost surely) be visited after every visit of $q$.

\begin{definition}[DSA to Alternating GFM B\"uchi]
For a given deterministic Streett automaton $\mathcal{S} = \tuple{\Sigma,Q,q_0,\delta,\{\langle G_i,R_i\rangle \mid i\in \{1,\ldots,k\}}$, where we assume without loss of generality that $G_i\cap R_i = \emptyset$ for all $i=1,\ldots,k$, we construct an alternating B\"uchi automaton
$\mathcal{A} = \tuple{\Sigma,Q,Q\times\{0,\ldots,k\},q_0,\delta',F}$ where:
\begin{itemize}
\item 
First, for every state $q \in Q$, we let $I_q = \{0\} \cup \{i \mid q \in R_i\}$.

\item
We now define, for every state $q \in Q$ and letter $a \in \Sigma$, where $q' = \delta(q,a)$:
\begin{itemize}
    \item 
$\delta'(q,a)= \{q',(q',0)\}$ and,

\item for all $i = 0,\ldots,k$, 
$\delta'((q,i),a) = \{q'\}\times (I_{q'}\smallsetminus \{i\})$ if $q' \in G_i$ and \\
$\delta'((q,i),a) = \{q'\}\times I_{q'}$ if $q' \notin G_i$.
\end{itemize}
\item Finally, we set the set of final transitions to $F = \{(q,i),a,(q',j) \mid i\neq j \mbox{ or } i=j=0\}$.
\end{itemize}
\end{definition}
Note that the projection on the state of $\mathcal S$ is not affected by this translation.

The intuition for this translation is that the acceptance game starts in the original copy of the states---the nondeterministic states $Q$.
From there, the acceptance player can \emph{declare} when he has reached an accepting end-component, moving from the original copy to the $0$-copy of the game.
The rejection player can henceforth, whenever a state from the $i^{th}$ red set $R_i$ is seen, move from a $j$-copy 
to the $i$-copy, which can be viewed as a claim that the requirement on the $i^{th}$ Streett pair is not fulfilled (finitely many $R_i$ or infinitely many $G_i$ states).
She therefore \emph{challenges} the acceptance player to visit a state from the $i^{th}$ green set $G_i$ (an $i$-challenge for short).
When the game is in the $j$-copy, the game moves back to the $0$-copy when no new challenge is made and a state in $G_j$ is visited.
Otherwise, the game stays in the $j$-copy.

The acceptance player wins if the rejection player makes infinitely many challenges (the $i \neq j$ part of the final transitions) or if the game stays infinitely often in the $0$-copy (the $i=j=0$ part of the final states).
The rejection player wins if
the acceptance player never declares, or if she makes only finitely many challenges, and her last challenge is never met.

To keep the definition simple, we have allowed the rejection player to always withdraw a challenge by moving back to the $0$ copy without reason.
This is never an attractive move for her (so long as she has other options), and can therefore be omitted in an implementation.

\begin{figure}[t]
\centering
\begin{tikzpicture}[
every text node part/.style={align=center},
every state/.style={fill=safecellcolor}]
\node[state,initial] (q0) at (0,0) {$q_0$};
\node[state] (q1) at (2,0) {$q_1$};
\draw[->] (q0) edge[loop above] node[above] {$a$} ();
\draw[->] (q0) edge[bend left=20] node[above] {$b$} (q1);
\draw[->] (q1) edge[loop above] node[above] {$b$} ();
\draw[->] (q1) edge[bend left=20] node[below] {$a$} (q0);
\node at (5,0.25) {$G_1 = \{q_0\}, R_1 = \{q_1\}$\\[1mm]{}$G_2 = \{q_1\}, R_2 = \{q_0\}$};
\end{tikzpicture}
\caption{Streett automaton on the alphabet $\Sigma = \{a,b\}$.}
\label{fig:streett}
\end{figure}
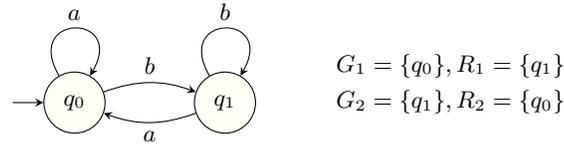

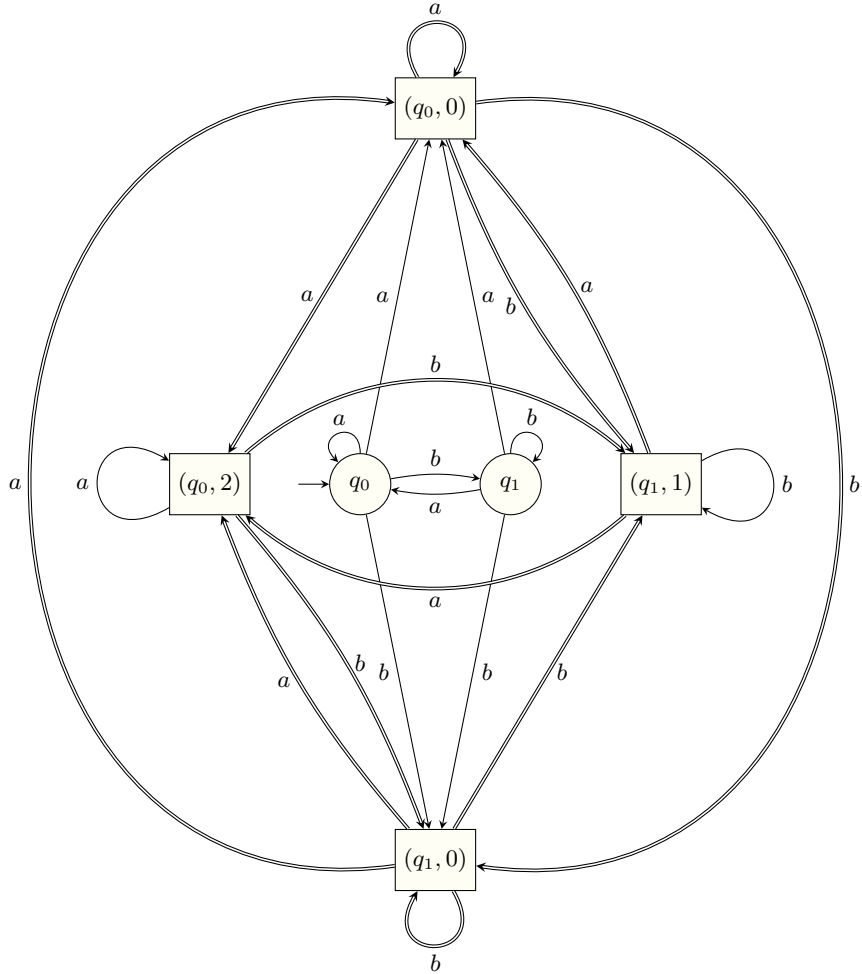
\begin{figure}[t]
\centering
\begin{tikzpicture}[
  every text node part/.style={align=center},
  every state/.style={fill=safecellcolor}]
\useasboundingbox (-5,-6.6) rectangle (7,6.6);
\node[state,initial] (q0) at (0,0) {$q_0$};
\node[state] (q1) at (2,0) {$q_1$};
\node[state,rectangle] (q00) at (1, 5) {$(q_0,0)$};
\node[state,rectangle] (q10) at (1,-5) {$(q_1,0)$};
\node[state,rectangle] (q01) at (-2, 0) {$(q_0,2)$};
\node[state,rectangle] (q12) at (4, 0) {$(q_1,1)$};
\draw[->,every loop/.style={looseness=3}] (q0) edge[in=135,out=90,loop] node[above] {$a$} ();
\draw[->] (q0) edge[bend left=10] node[above] {$b$} (q1);
\draw[->] (q1) edge[bend left=10] node[below] {$a$} (q0);
\draw[->,every loop/.style={looseness=3}] (q1) edge[in=45,out=90,loop] node[above] {$b$} ();
\draw[->] (q0) -- node[left] {$a$} (q00);
\draw[->] (q0) -- node[left] {$b$} (q10);
\draw[->] (q1) -- node[right] {$a$} (q00);
\draw[->] (q1) -- node[right] {$b$} (q10);
\draw[->,double] (q00) -- node[left] {$a$} (q01);
\draw[->] (q00) edge[loop above,double] node[above] {$a$} ();
\draw[->,double] (q00) .. controls +(7,1) and +(7,-1) .. node[right] {$b$} (q10);
\draw[->] (q00) edge[bend right=10,double] node[left] {$b$} (q12);
\draw[->,double] (q10) edge[loop below,double] node[below] {$b$} ();
\draw[->,double] (q10) .. controls +(-7,-1) and +(-7,1) .. node[left] {$a$} (q00);
\draw[->] (q01) edge[loop left] node[left] {$a$} ();
\draw[->] (q01) edge[bend left=10,double] node[right] {$b$} (q10);
\draw[->] (q01) edge[bend left=41,double] node[above] {$b$} (q12);
\draw[->] (q12) edge[bend left=41,double] node[below] {$a$} (q01);
\draw[->] (q10) edge[bend left=10,double] node[left] {$a$} (q01);
\draw[->,double] (q10) -- node[right] {$b$} (q12);
\draw[->] (q12) edge[loop right] node[right] {$b$} ();
\draw[->] (q12) edge[bend right=10,double] node[right] {$a$} (q00);
\end{tikzpicture}
\caption{Alternating GFM corresponding to the deterministic Streett automaton shown in Figure~\ref{fig:streett}. Here double edges depict accepting transitions.  The states controlled by the acceptance player are shown as circles, while the states controlled by the rejection player are shown as boxes.}
\label{fig:streett-to-gfm}
\end{figure}

\begin{example}
Consider the deterministic Streett automaton sketched in Figure \ref{fig:streett}.
The language of this automaton is $((a|b)^*ab)^\omega$ (seeing infinitely many $a$-s and infinitely many $b$-s):
because of its acceptance condition, from $(G_1,R_1)$, we must infinitely often see $q_0$ or only finitely see $q_1$ and at the same time from $(G_2,R_2)$ we must infinitely often see $q_1$ or only finitely see $q_0$.
It is easy to see that this condition is equivalent to requiring that we infinitely often see both $q_0$ and $q_1$.
Therefore, we require to see infinitely many $a$-s and infinitely many $b$-s.

In Figure~\ref{fig:streett-to-gfm}, we provide a translation to an alternating GFM B\"uchi automaton, where
double arrows indicate accepting transitions.
The resulting GFM automaton contains the original Streett automaton, i.e., $q_0$, $q_1$, and the transitions between them. They are the states referred to as ``the original copy,'' and the only nondeterministic states (states where the acceptance player chooses---depicted as circles).
None of the transitions from the original copy are accepting.

At some point, the acceptance player can make the decision to move to the final part of the automaton, consisting of states of the form $(q,i)$ (which is referred to as the acceptance player declaring).
All of these states are universal: only the rejection player makes choices. Universal states are depicted as squares.

In this part of the automaton, all transitions except the loops on $(q_0,2)$ and $(q_1,1)$ are accepting.
Because of this, if the last character read is an $a$, the rejection player can move to $(q_0,2)$ and stay in the non-accepting loop until a $b$ is read.
Similarly, if the last letter read is an $a$, the rejection player can move to $(q_0,2)$ and stay in the non-accepting loop until a $b$ is read.
Likewise, when the last letter read is a $b$, the rejection player can move to $(q_1,1)$ and stay in the non-accepting loop until a $b$ is read.
Thus, the alternating B\"uchi automaton recognises the same language as the original deterministic Streett automaton.

As we have remarked in the construction, voluntary moves of the rejection player to the $0$-copy (intuitively: withdrawing her latest challenge) were allowed only to simplify the definition and can be omitted in an implementation. They are not included in the drawing to avoid clutter.
\end{example}

\section{Correctness of the Construction}
\label{sec:correct}

In order to prove that the alternating B\"uchi automaton is good-for-MDPs, we first show that using this automaton provides at least the same syntactic probability to win as using the deterministic Streett automaton $\mathcal S$.

\begin{lemma}
\label{lem:atMostSame}
Let $\mathcal S$ be a deterministic Streett automaton and $\mathcal A$ the alternating automaton from above constructed from $\mathcal S$.
Then, for every MDP $\mathcal M$, $\mathcal M \times \mathcal A$ has at least the same winning probability as $\mathcal M \times \mathcal S$.
\end{lemma}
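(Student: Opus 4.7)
The plan is, for every $\varepsilon > 0$, to exhibit an acceptance-player strategy $\sigma'$ in $\mathcal M \times \mathcal A$ whose B\"uchi value against every rejection strategy is at least $\PSat^{\mathcal M}_{\mathcal S}(s_0) - \varepsilon$. Taking $\varepsilon \to 0$ then yields $\PSat^{\mathcal M}_{\mathcal A}(s_0, q_0) \geq \PSat^{\mathcal M}_{\mathcal S}(s_0)$, the statement of the lemma. To this end, fix a finite-memory optimal strategy $\sigma$ for the Streett MDP $\mathcal M \times \mathcal S$, which exists by standard results on Streett MDPs. Under $\sigma$ almost every run eventually stays in a single end-component $E$ in which every state is visited infinitely often, and $\sigma$ wins exactly when that $E$ is \emph{accepting}: for every pair $i$ either $E \cap R_i = \emptyset$ or $E \cap G_i \neq \emptyset$.

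To build $\sigma'$, choose $N$ large enough that with probability at least $\PSat^{\mathcal M}_{\mathcal S}(s_0) - \varepsilon$ the run is already trapped in an accepting end-component by step $N$. For MDP actions, $\sigma'$ plays exactly as $\sigma$ does on the induced history of $\mathcal M \times \mathcal S$. For the nondeterministic choice $\delta'(q,a) = \{q', (q',0)\}$ in the original copy, $\sigma'$ picks $q'$ at every step before $N$ and picks $(q',0)$ at step $N$; this latter move is the \emph{declaration}. After declaration, the copy index is controlled entirely by the rejection player, while $\sigma'$ continues to follow $\sigma$'s action choices.

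Next, fix an arbitrary rejection strategy $\pi$ and bound the B\"uchi probability under $(\sigma', \pi)$. Restrict attention to the event, of probability at least $\PSat^{\mathcal M}_{\mathcal S}(s_0) - \varepsilon$, that at step $N$ the trace has been trapped in an accepting end-component $E$. On this event the MDP trajectory stays in $E$ forever and visits every state of $E$ infinitely often almost surely. The only non-accepting transitions of $\mathcal A$ are the self-loops $(q,i) \to (q',i)$ with $i \neq 0$, so the B\"uchi condition fails only if after some finite step the copy is pinned at a single $i \neq 0$ forever. Entering the $i$-copy, however, requires visiting an $R_i$-state, so $E \cap R_i \neq \emptyset$, and by acceptance of $E$ also $E \cap G_i \neq \emptyset$. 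Hence $G_i$-states are visited infinitely often a.s., and each such visit forces the rejection player off copy $i$ by the definition of $\delta'$; this contradicts the pinning. Therefore the B\"uchi condition holds almost surely on the restricted event, and the overall B\"uchi probability is at least $\PSat^{\mathcal M}_{\mathcal S}(s_0) - \varepsilon$.

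The main obstacle is committing to a declaration step: the acceptance player may declare only at a history-measurable time and benefits only if declaration lands inside an accepting end-component. The plan handles this with the standard MDP convergence fact that, under a finite-memory strategy, the probability of being trapped in a given end-component by step $n$ tends monotonically to its limit, so a deterministic $N$ suffices to capture all but an $\varepsilon$ of the accepting mass. A secondary subtlety is the interplay between the structure of $E$ and the dynamics of the universal part of $\mathcal A$---namely, that ``$R_i$ reachable in $E$ implies $G_i$ reachable in $E$'' combined with fair visitation of $E$ forbids the rejection player from stabilizing in a non-zero copy---which is exactly where the Streett acceptance of $E$ gets converted into the B\"uchi acceptance of $\mathcal A$.
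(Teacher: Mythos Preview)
Your argument is correct and follows essentially the same route as the paper: fix an optimal finite-memory Streett strategy $\sigma$, have the acceptance player copy $\sigma$'s MDP actions, declare once the run sits inside an accepting leaf component, and observe that in such a component every $i$-challenge is almost surely answered because $E\cap R_i\neq\emptyset$ forces $E\cap G_i\neq\emptyset$, whose states are visited infinitely often. The only difference is cosmetic: the paper declares at the (history-measurable) moment of first entry into an accepting leaf component and thereby obtains a single finite-memory strategy achieving exactly $\PSat^{\mathcal M}_{\mathcal S}(s_0)$, whereas you declare at a deterministic time $N$ and recover the bound via an $\varepsilon\to 0$ limit; both are valid, with the paper's version avoiding the limit and yielding a concrete strategy witness.
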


\begin{proof}
We first observe that the acceptance player (as the Streett player in a finite state Streett game) has an optimal pure finite state strategy $\sigma$ for $\mathcal M \times \mathcal S$.
Let $(\mathcal M \times \mathcal S)_\sigma$ be the Markov chain obtained by using this optimal control.

In $(\mathcal M \times \mathcal S)_\sigma$, we will almost surely reach a leaf component, and the chance of winning is the chance of reaching an accepting leaf component (i.e., a leaf component where the Streett condition is almost surely satisfied).

For $\mathcal M \times \mathcal A$, we now define a pure finite state strategy $\tau$ for the acceptance player from $\sigma$ and $(\mathcal M \times \mathcal S)_\sigma$ as follows.
Outside the accepting leaf components, we follow $\sigma$ and stay in the original copy.
When entering an accepting leaf component, we move to the $0$-copy, but otherwise make the same decision as for $\sigma$. Henceforth, we make the same decision that $\sigma$ would make on the history obtained by ignoring in which $i$-copy we are.
(Note that the decision on making an $i$-challenge, and hence on which $i$-copy should be visited, rests with the rejection player.)

As this was an accepting leaf component in $(\mathcal M \times \mathcal S)_\sigma$, if there is, for any pair $\langle G_i,R_i \rangle$, a (red) state in $R_i$ in the leaf component, there is also a (green) state in $G_i$, and this state is almost surely visited infinitely often.
Consequently, every challenge will, almost surely, eventually be met, and the acceptance player will win almost surely from these positions, regardless of how the rejection player plays.
Thus, $\tau$ provides (at least) the same probability to win in $\mathcal M \times \mathcal A$ as $\sigma$ provides for $\mathcal M \times \mathcal S$.
\qed
\end{proof}

Different to the case of nondeterministic good-for-MDP automata originally suggested in \cite{Hahn20}, we also have to show that the probability of winning for $\mathcal A$ cannot exceed that for $\mathcal S$.

\begin{lemma}
\label{lem:streett-at-least-same-prob}
Let $\mathcal S$ be a deterministic Streett automaton and $\mathcal A$ the alternating automaton from above constructed from $\mathcal S$.
Then, for every MDP $\mathcal M$, $\mathcal M \times \mathcal S$ has at least the same winning probability as $\mathcal M \times \mathcal A$.
\end{lemma}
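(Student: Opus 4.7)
The plan is to convert any optimal acceptance-player strategy in the B\"uchi game on $\Mm\times\Aa$ into a strategy on $\Mm\times\mathcal{S}$ of at least the same Streett-satisfaction value, using a latest appearance record (LAR) over the $k$ Streett pairs as finite memory. The difficulty is that the positional $\sigma^*$ generally prescribes different MDP actions at the different $i$-copies of a given $(s,q)$-pair, and in $\Mm\times\mathcal{S}$ there is no rejection player to pick the copy; the LAR will play that role by simulating a specific adversarial rejection strategy whose successes coincide with Streett violations.

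First I would fix an optimal pure positional acceptance strategy $\sigma^*$ in $\Mm\times\Aa$ with value $p^*=\PSat^\Mm_\Aa(s_0,q_0)$, which exists by positional determinacy of stochastic B\"uchi games; note that since no transition within the original copy is accepting, any winning run must at some point take a declaration edge $q\to(q',0)$. I would then define a finite-memory rejection strategy $\pi_{LAR}$ that maintains a permutation $\lambda$ of $\{1,\dots,k\}$, updated by moving $j$ to the front whenever a state whose automaton component lies in $G_j$ is visited. Its rule is: after declaration, stay in the current $i$-copy whenever legal; whenever in the $0$-copy or ejected from an $i$-copy by a $G_i$-visit, let $i^\dagger$ be the LAR-tail and commit to $i^\dagger$ if $i^\dagger$ lies in the current option set, otherwise choose $0$. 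Finally, I would define $\tau$ on $\Mm\times\mathcal{S}$ with memory $(\lambda,i)$ updated in lockstep with $\pi_{LAR}$ against $\sigma^*$'s declarations, playing at $(s,q)$ the action $\sigma^*((s,q,i))$; this is well-defined because the deterministic $\mathcal{S}$ fixes $q$ as a function of history. By construction, the distribution of $\Mm$-trajectories under $\tau$ coincides with the projection of $\Mm\times\Aa$-trajectories under $(\sigma^*,\pi_{LAR})$ onto the $(s,q)$-component.

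The crux, and what I anticipate as the main obstacle, is the key lemma: every run of $(\sigma^*,\pi_{LAR})$ whose $\Mm\times\Aa$ trace satisfies the B\"uchi acceptance condition projects onto an $\Mm\times\mathcal{S}$ trajectory satisfying Streett. I would argue by contrapositive: let $V=\{\,j : R_j \text{ visited infinitely often and } G_j \text{ only finitely often}\,\}$ and suppose $V\ne\emptyset$. Let $T$ be a time past which no $G_j$-visit occurs for any $j\in V$. From $T$ onward, LAR updates affect only $V^c$-pairs, which are repeatedly pushed to the front of $\lambda$, so the $V$-pairs collectively settle at the tail and some $i^*\in V$ becomes the permanent LAR-tail. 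Because $G_i\cap R_i=\emptyset$ by assumption, $i^*$ lies in the option set at every $R_{i^*}$-visit, and such visits occur infinitely often; hence $\pi_{LAR}$ must, at some step strictly after $T$ and while in the $0$-copy or just ejected from another copy, commit to the $i^*$-copy (any earlier commitment to a $V$-pair strictly after its last $G$-visit already yields the same conclusion). Once $\pi_{LAR}$ is committed to $i^*$ past $T$ it never leaves (no further $G_{i^*}$-visit occurs), producing only non-accepting $i^*\to i^*$ transitions from that point and violating B\"uchi, a contradiction.

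Combining, since $\sigma^*$ is optimal, $(\sigma^*,\pi_{LAR})$ satisfies B\"uchi with probability at least $p^*$, and the key lemma transfers this to a Streett-satisfaction probability of at least $p^*$ for $\tau$, giving $\PSemSat^\Mm_\mathcal{S}(s_0)\ge p^*=\PSat^\Mm_\Aa(s_0,q_0)$, which is the statement of the lemma.
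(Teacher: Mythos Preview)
Your overall architecture---fix an optimal positional acceptance strategy $\sigma^*$, simulate it against a particular finite-memory rejection strategy $\pi_{\mathrm{LAR}}$, and transfer the resulting B\"uchi probability to a Streett probability via a trajectory-wise implication---is sound in spirit and is a genuinely different route from the paper's. The paper does not simulate a concrete rejection strategy; instead it introduces the value-based notion of a \emph{good copy} (a copy whose winning probability equals that of the original copy), lets $\tau$ follow $\sigma$ at the \emph{oldest reachable good copy}, and then analyses leaf components of the induced Markov chain. Your approach is more elementary in that it avoids reasoning about values, but it pays for this by needing the key lemma ``B\"uchi $\Rightarrow$ Streett'' to hold for the \emph{specific} $\pi_{\mathrm{LAR}}$ you choose.

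That key lemma has a gap. Your sentence ``LAR updates affect only $V^c$-pairs, which are repeatedly pushed to the front of $\lambda$, so the $V$-pairs collectively settle at the tail and some $i^*\in V$ becomes the permanent LAR-tail'' is where the argument breaks. A pair $j\in V^c$ need not be repeatedly pushed to the front: if both $R_j$ and $G_j$ are visited only finitely often, then $j$ is never moved after some point and can sit permanently at the tail of $\lambda$. In that situation your $\pi_{\mathrm{LAR}}$ will, from the $0$-copy, forever try to commit to $j$, fail (since $R_j$ is no longer visited), and stay in the $0$-copy, producing only accepting $0\to 0$ transitions. Thus B\"uchi is satisfied along such a run even though Streett is violated via some pair in $V$, and the contrapositive fails. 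Concretely, with two pairs where pair~$1$ has $R_1$ infinitely often and $G_1$ finitely often, while pair~$2$ has both $R_2$ and $G_2$ finitely often and the last $G_2$-visit precedes the last $G_1$-visit, the permanent tail is $2\notin V$.

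This is not easily patched by, say, committing to the oldest \emph{challengeable} index instead: one can arrange cycles in which every ejection point and every $0$-copy step lands outside $\bigcup_{j\in V}R_j$, so that the rejection player keeps bouncing between non-$V$ challenges and never locks into a $V$-copy. The paper's ``good copy'' device is precisely what cuts through this: restricting attention to copies whose value equals the original value guarantees that, at any red state for a violated pair, the corresponding challenge leads to another good copy, and the LAR among good copies then stabilises correctly. If you want to salvage your direct-simulation approach you will need either a substantially smarter $\pi_{\mathrm{LAR}}$ or to import a value-based restriction of the candidate indices analogous to the paper's.
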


Before starting the proof, we define useful terminology, and make the assumption, for simplicity, that a positional optimal strategy $\sigma$ for the acceptance player on $\mathcal M \times \mathcal A$ has been fixed.

We call two states of
$\mathcal M \times \mathcal A$ \emph{related}, if they refer to the same vertex of $\mathcal M$ and $\mathcal S$, but possibly to different copies of this state in $\mathcal A$.
For such related states, it is obviously the case that the probability to win from the $0$-copy is at least as high as the probability to win from any other $i$-copy, as the acceptance player can just play as if he started in that $i$-copy until the time where the first challenge is made.
(The only difference with respect to acceptance from the $i$-copy is then that paths where no challenge is made become winning, such that the probability to win can only go up.)
We further observe that the probability to win from the original copy is always at least as high as the probability to win from the $0$-copy, as the acceptance player can always declare.

We therefore coin the term ``good copy'' of a state: a copy of a state is \emph{good} if, and only if, the probability of winning from this copy is as high as the probability of winning from the original copy.
A good copy is called \emph{reachable} if it is reachable in $(\mathcal M \times \mathcal A)_\sigma$.
The \emph{oldest} reachable good copy of a state is the good copy $i$, for which the last visit to $G_i$ is longest ago, where the higher number is given preference as a tie breaker. In particular, the $0$-copy is only the oldest reachable copy, when it is the only reachable good copy different to the original copy. If no other reachable copy is good, the original copy is the oldest reachable good copy.
Naturally, all $\sigma$-successors of a reachable good copy are reachable good copies.

Note that the property of being the oldest reachable good copy is relative to the history; a latest appearance record (also known as index appearance record) \cite{Gurevi82,DBLP:conf/stoc/Safra92,DBLP:journals/siamcomp/Safra06,thomas03}
is a standard memory structure of size $k!$ for keeping track of all information required for determining the oldest copy for a given history.
Let $M_{\mathcal S}$ be such a memory structure.

\begin{proof}
Let $\sigma$ be an optimal positional strategy of the acceptance player in the B\"uchi game $\mathcal M \times \mathcal A$, and let $\mathcal S' = \mathcal S \times M_{\mathcal S}$ be $\mathcal S$ equipped with a latest appearance record with $>$ as a tie breaker.
We use this to construct the positional strategy $\tau$ for $\mathcal M \times \mathcal S'$ as the strategy that makes the same choice $\sigma$ makes for the oldest reachable good copy of that state in the $\mathcal S$ projection of $\mathcal S'$.

It now suffices to show that the rejecting leaf components of $(\mathcal M \times \mathcal S')_\tau$ refer to states of $\mathcal M \times \mathcal A$, whose good copies have a winning probability of $0$.

We first assume that there is a reachable leaf component that contains a state, where the oldest reachable good copy is the original copy.
Note that this implies that the original copy is the only reachable good copy of that state.
Naturally, the successor of a reachable good copy under $\sigma$ is a reachable good copy, so every predecessor of the original copy, and by induction the complete leaf component, consists of states, where the original copy is the only good reachable copy.
Thus, this leaf component in $(\mathcal M \times \mathcal S)_\tau$ projects into an end-component in
$(\mathcal M \times \mathcal A)_\sigma$, where the rejection player has no decisions, and where no final transition occurs.
The winning probability of all states in this end-component is $0$.

We now assume that the rejecting leaf component contains only states with the same oldest reachable copy $i \geq 1$.
Then the leaf component follows the positional strategy for the $i$-copy in $(\mathcal M \times \mathcal A)_\sigma$; note that this entails that it does not contain a state in $G_i$.
Therefore the rejection player surely wins in the $i$-copy of this end-component in $(\mathcal M \times \mathcal A)_\sigma$ by never changing her challenge.

Let us finally turn to the case where a leaf component in $(\mathcal M \times \mathcal S)_\tau$ contains only states, where all oldest reachable good copies are not the original copy, and that these copies are different, or all $0$.
We assume for contradiction that the leaf component is rejecting.
Then there must be an index $i$ such that there is a (red) state from $R_i$ in the leaf component, but not a (green) state from $G_i$. Moreover, there must be an $i^*$ with this property where, in the given history, the last occurrence of $G_{i^*}$ is longest ago, using $>$ as tie breaker.
Further, let us consider a path through this leaf component that visits states from all (green) sets $G_{i'}$ represented in this leaf component.

Let us now consider a (red) state in $R_{i^*}$ in the leaf component.
If the $j$-copy is not the $i^*$ copy, then, as the rejection player can make an $i^*$ challenge, the $i^*$ copy (as a viable successor under the optimal strategy) must be a reachable good copy of the state, too, and therefore, by our assumption, the oldest reachable good state.
Thus, we move on to the $i^*$ copy, and henceforth never leave it, contradicting the assumption that we are in a leaf component that contains different copies, or only the $0$-copy, as oldest reachable states.

We have shown that we almost surely reach a leaf component, where the probability of winning all related states is $0$ in $\mathcal M {\times} \mathcal A$, or where the chance of winning is $1$.
Together with the local consistency of the probabilities, we get the claim.
\qed
\end{proof}

The two lemmas from this section imply that the syntactic and semantic probability to win are the same for all MDPs---in short, that $\mathcal A$ is good-for-MDPs.
This in particular implies language equivalence on ultimately periodic words (which are a special case of Markov chains, where every state has only one successor), and therefore on all words, as two $\omega$-automata that accept the same ultimately periodic words recognise the same language.

Moreover, we have provided a translation of an optimal strategy obtained for $\mathcal M \times \mathcal A$ into a strategy for $\mathcal M \times \mathcal S$ with (at least, and then with Lemma \ref{lem:atMostSame} precisely) the same optimal probability to win in the proof of Lemma \ref{lem:streett-at-least-same-prob}.

\begin{corollary}
\label{cor:correct}
The alternating B\"uchi automaton $\mathcal A$ that results from the construction of Section \ref{ssec:construction} from a DSA $\mathcal S$ is a good-for-MDPs automaton that recognises the same language as $\mathcal S$.
Moreover, we can infer an optimal control strategy for the acceptance player for $\mathcal M \times \mathcal S$ from an optimal strategy of the acceptance player in $\mathcal M \times \mathcal A$.
\qed
\end{corollary}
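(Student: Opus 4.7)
The plan is to assemble the corollary directly from the two lemmas, which already carry the substantive content. First, combining Lemma~\ref{lem:atMostSame} and Lemma~\ref{lem:streett-at-least-same-prob} yields the equality $\PSat^\Mm_{\mathcal A}(s_0,q_0)=\PSat^\Mm_{\mathcal S}(s_0,q_0)$ for every MDP $\Mm$. Because $\mathcal S$ is deterministic it is trivially good-for-MDPs, so the right-hand side coincides with $\PSemSat^\Mm_{\mathcal S}(s_0)$; the content of the two lemmas is thus already an equality of \emph{syntactic} and \emph{semantic} probabilities, modulo language equivalence between $\mathcal S$ and $\mathcal A$.

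Next I would establish $\mathcal L(\mathcal A)=\mathcal L(\mathcal S)$. For each ultimately periodic word $w=uv^\omega$ over $2^{AP}$, build the labelled Markov chain $\Mm_w$ that deterministically produces $w$, namely a cycle of $|v|$ states preceded by $|u|$ states, each carrying the appropriate label. Since $\Mm_w$ has no stochastic or strategic MDP-side choice, for any $\omega$-automaton $\mathcal B$ the quantity $\PSemSat^{\Mm_w}_{\mathcal B}(s_0)$ lies in $\{0,1\}$ and equals $1$ iff $w\in\mathcal L(\mathcal B)$. On the alternating side, $\PSat^{\Mm_w}_{\mathcal A}$ is the value of a finite two-player B\"uchi game with no stochastic branching and is therefore also $\{0,1\}$-valued and equal to $1$ iff $\mathcal A$ accepts $w$. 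The already-established equality of syntactic probabilities on $\Mm_w$ then forces $w\in\mathcal L(\mathcal A)\iff w\in\mathcal L(\mathcal S)$, and since $\omega$-regular languages are determined by their ultimately periodic words, $\mathcal L(\mathcal A)=\mathcal L(\mathcal S)$.

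From language equivalence we obtain $\PSemSat^\Mm_{\mathcal A}(s_0)=\PSemSat^\Mm_{\mathcal S}(s_0)$ for every MDP $\Mm$, and chaining the three equalities delivers $\PSat^\Mm_{\mathcal A}(s_0,q_0)=\PSemSat^\Mm_{\mathcal A}(s_0)$, which is exactly Definition~\ref{def:gfm} of $\mathcal A$ being good-for-MDPs. For the final clause, the extraction of an optimal strategy requires no new work: the proof of Lemma~\ref{lem:streett-at-least-same-prob} already constructs, from an optimal positional acceptance-player strategy $\sigma$ on $\Mm\times\mathcal A$, the positional strategy $\tau$ on $\Mm\times\mathcal S'$ (with $\mathcal S'=\mathcal S\times M_{\mathcal S}$ tracking the latest appearance record of size $k!$) that plays the choice $\sigma$ makes in the oldest reachable good copy. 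Projecting $\tau$ through the LAR memory yields the claimed finite-memory control strategy for $\Mm\times\mathcal S$, and its optimality follows because Lemma~\ref{lem:streett-at-least-same-prob} shows it matches the value of $\sigma$ while Lemma~\ref{lem:atMostSame} shows that this value is already optimal.

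The main obstacle I anticipate is not a deep argument but a careful handling of the language-equivalence step: one must observe that the alternating B\"uchi acceptance game on $\Mm_w\times\mathcal A$ is a finite two-player B\"uchi game, hence determined with value in $\{0,1\}$, so that the equality $\PSat^{\Mm_w}_{\mathcal A}=\PSat^{\Mm_w}_{\mathcal S}$ genuinely transfers to acceptance equivalence rather than only to agreement on some intermediate probability. Once this is stated, everything else is bookkeeping built on the two lemmas.
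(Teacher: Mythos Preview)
Your proposal is correct and follows essentially the same approach as the paper: combine the two lemmas to equate the syntactic probabilities, use ultimately periodic words viewed as degenerate Markov chains to obtain language equivalence (and hence the GFM property via Definition~\ref{def:gfm}), and read off the strategy extraction from the construction inside the proof of Lemma~\ref{lem:streett-at-least-same-prob}. If anything, your ordering---establishing $\mathcal L(\mathcal A)=\mathcal L(\mathcal S)$ \emph{before} concluding GFM---is slightly more careful than the paper's wording, which informally asserts GFM first and derives language equivalence afterwards; your explicit observation that the product with $\Mm_w$ is a finite two-player B\"uchi game with $\{0,1\}$ value fills a step the paper leaves implicit.
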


\begin{figure}[t]
\centering
\begin{tikzpicture}[
every text node part/.style={align=center},
every state/.style={fill=safecellcolor}]
\node[state,initial,inner sep=1pt] (q0) at (0,0) {$q_0,(1,2)$};
\node[state,inner sep=1pt] (q1) at (3.5,0) {$q_1,(2,1)$};
\draw[->] (q0) edge[loop above] node[above] {$a$} ();
\draw[->] (q0) edge[bend left=20] node[above] {$b$} (q1);
\draw[->] (q1) edge[loop above] node[above] {$b$} ();
\draw[->] (q1) edge[bend left=20] node[below] {$a$} (q0);
\node at (7,0.25) {$G_1 = \{q_0\}, R_1 = \{q_1\}$\\[1mm]{}$G_2 = \{q_1\}, R_2 = \{q_0\}$};
\end{tikzpicture}
\caption{Streett automaton of Fig.~\ref{fig:streett} extended with latest appearance record as memory.}
\label{fig:streett-memory}
\end{figure}
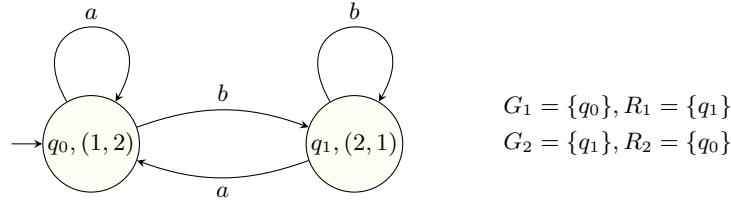

\paragraph{\bf Example.}

In Figure~\ref{fig:streett-memory}, we have extended the Streett automaton of Figure~\ref{fig:streett} with latest appearance record memory~\cite{DBLP:conf/stoc/Safra92,DBLP:journals/siamcomp/Safra06,thomas03} as discussed in Lemma~\ref{lem:streett-at-least-same-prob}.
The memory is added by extending the states with a vector of a permutation of the indices, such that the first entry corresponds to the index $i$ for which $G_i$ was most recently visited, the next one corresponds to the one before, etc., until the last ($k^{th}$) one, which corresponds to the oldest $G_i$ visited.
Note that, although the number of states stays the same, there are $k!$ permutations, and they could all be reachable.
This can lead---though not in this example---to a significant increase in the number of states needed to store the strategy explicitly.
\bigskip

We note that the memory we actually need is often smaller than the LAR we have mentioned, as the order can be mangled finitely often.
That would, for example, allow us to only keep the order in some SCCs, namely those where we might get stuck in (with probability $\neq 0$)---and, of course, only for those indices that occur in states within these SCCs.

Note that the definition relative to reachability under $\sigma$ is not required for correctness, but it provides the required connection to learning: when learning an optimal strategy in the game, the bit that is reachable under the optimal strategy we have learned is enough for constructing a pure finite state strategy.

\subsection{Succinctness}

Corollary \ref{sec:correct} shows that the alternating B\"uchi automaton $\mathcal A$ that results from the construction of Section \ref{ssec:construction} from a DSA $\mathcal S$ is a good-for-MDPs automaton that recognises the same language as $\mathcal S$, and the number of states of $\mathcal A$ is merely $O(kn)$, where $n$ and $k$ are the number of states and Streett pairs of $\mathcal S$.
At the same time, the translation of a deterministic Streett automaton to a nondeterministic Rabin automaton (without a restriction to GFM) leads to a blow-up that results in $2^{\theta(n)}$ states \cite{SV89}, while a translation to an NBA requires $n2^{\theta(k)}$ states \cite{SV89}, even without the restriction to GFM.

This immediately provides the following theorem.

\begin{theorem}
Alternating GFM B\"uchi automata can be exponentially more succinct than (general) nondeterministic B\"uchi and Rabin automata.
\qed
\end{theorem}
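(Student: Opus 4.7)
The plan is to obtain the result as an immediate consequence of combining the upper bound from our construction in Section~\ref{ssec:construction} with the known exponential lower bounds for translating deterministic Streett automata to nondeterministic B\"uchi or Rabin automata from~\cite{SV89}. Concretely, I would exhibit a family of languages $(L_n)_{n \in \mathbb{N}}$ witnessing the gap, rather than argue about every language.

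First I would fix a witness family. The lower-bound constructions of Safra and Vardi yield, for every $n$, a language $L_n$ recognised by a deterministic Streett automaton $\mathcal{S}_n$ whose state count and index are both polynomial in $n$ (so $|\mathcal{S}_n| \cdot k_n \in \mathrm{poly}(n)$), while any nondeterministic B\"uchi automaton for $L_n$ has at least $n \cdot 2^{\Theta(k_n)}$ states and any nondeterministic Rabin automaton has at least $2^{\Theta(n)}$ states. I would cite these families directly from~\cite{SV89}; no new lower-bound argument is needed.

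Next I would apply the construction of Section~\ref{ssec:construction} to each $\mathcal{S}_n$ to obtain an alternating B\"uchi automaton $\mathcal{A}_n$. By Corollary~\ref{cor:correct}, $\mathcal{A}_n$ is good-for-MDPs and recognises the same language $L_n$ as $\mathcal{S}_n$. Counting states from the construction, $\mathcal{A}_n$ has $|Q| + |Q|\cdot(k_n{+}1) = O(|Q|\,k_n)$ states, which is polynomial in $n$.

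Putting the two bounds side by side on the family $(L_n)$ gives a polynomial-size alternating GFM B\"uchi automaton versus an exponential-size nondeterministic B\"uchi or Rabin automaton, which is precisely the claimed exponential separation. The only real care needed is to choose the witness family so that \emph{both} the NBA and NRA lower bounds apply simultaneously---if a single family achieving both is not explicit in~\cite{SV89}, I would simply use two families, one for each comparison, since the theorem statement quantifies over the two target classes independently. I do not expect any substantive obstacle: all the heavy lifting (the upper bound, the GFM property, and language equivalence) is already supplied by Corollary~\ref{cor:correct}, and the lower bounds are quoted verbatim from prior work.
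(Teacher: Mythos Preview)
Your proposal is correct and follows essentially the same route as the paper: the paper's proof is just the paragraph preceding the theorem, which combines the $O(kn)$ upper bound from the construction together with Corollary~\ref{cor:correct} (language equivalence and the GFM property) against the $2^{\Theta(n)}$ and $n2^{\Theta(k)}$ lower bounds quoted from~\cite{SV89}. Your extra care about possibly needing two witness families is a reasonable refinement, but otherwise the argument is identical.
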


\subsection{$\mathcal A$ is not Good-for-Games}
\label{app:not4games}

We have shown that the alternating B\"uchi automaton $\mathcal A$ we have constructed from a deterministic Streett automaton $\mathcal S$ in Section \ref{ssec:construction} is good-for-MDPs.
To outline the difference, we now discuss why $\mathcal A$ is not, in general, good-for-games \cite{Henzin06} on the example of a deterministic Streett automaton $\mathcal S$ with two states, $a$ and $b$, and one Streett pair $\langle G_1,R_1 \rangle = \langle \emptyset,\{a\}\rangle$.
The automaton is in state $a$ after reading an $a$, and in state $b$ otherwise.
It recognises the language of all words that contain only finitely many $a$-s.

A counter-example, which shows that the alternating B\"uchi automaton $\mathcal A$ that results from the construction in Section \ref{ssec:construction} is not good for games \cite{Henzin06}, must have choice for the rejection player, as it is otherwise representable as an MDP.
Consider the one player game with two states, where the rejection player can choose in the initial state to play a $b$ and stay, or to play an $a$ and move on to the second state.
From this second state, the only available action is to stay and play a $b$.
In this one-player game, the rejection player can play only a single $a$.
Therefore, whatever she does, the word she constructs is in the language of the automaton.
She can, however, win the acceptance game played on the product of this two-state game and $\mathcal A$ by staying in the initial state until the acceptance player declares, and then moving on to the second state.
Following this approach, she moves to the $1$-copy for $\mathcal A$, creating an obligation to see a state from $G_1$ (which is empty) when the acceptance player declares.

She wins this game, irrespective of whether the acceptance player eventually declares (in which case the obligation she creates is never met), or never declares (as the original copy does not contain final states / transitions).
  The difference is that, as the rejection player can move in the game, the acceptance player cannot reach an end-component that he can almost surely cover.

\section{Discussion}
\label{sec:conclude}
When $\omega$-regular objectives were first used in model checking MDPs, deterministic Rabin automata were used to represent the objectives.
The same has been attempted by the reinforcement learning community:
when they first turned to $\omega$-regular objectives, they tried the tested route through deterministic Rabin automata \cite{Sadigh14}, but that translation fails as shown in \cite{Hahn19}.
Of course, with the current state of knowledge of good-for-MDPs automata, it is not hard to translate deterministic Rabin automata to nondeterministic B\"uchi automata that are good-for-MDPs, and then to analyse the product of such a B\"uchi automaton and the MDP in question.

While MDPs with B\"uchi conditions are a (relatively) easy target for RL methods (like $Q$-learning \cite{Hahn19,Hahn20}), a similar translation of Streett automata (or for minimising the chance of meeting a Rabin objective) appears prohibitive. This is because \emph{every} translation from DSAs to nondeterministic B\"uchi (or even to Rabin) automata incurs an exponential blow-up in the worst case.
Surprisingly, we found a way to allow even this accepting condition to be efficiently used in reinforcement learning by generalising the property of being good-for-MDPs to alternating automata, and by constructing an equivalent good-for-MDPs alternating B\"uchi automaton with linear overhead.

\bibliography{papers}

\end{document}